\renewcommand{\a}{\alpha}
\newcommand{\g}{\gamma}
\renewcommand{\d}{\delta}
\renewcommand{\O}{\Omega}
\def\cd{{\mathcal D}}
\def\ce{{\mathcal E}}
\newcommand\eps{\varepsilon}
\renewcommand{\epsilon}{\varepsilon}
\newcommand{\DD}{{\cal D}}
\newcommand{\etaeps}{\eta_\epsilon}
\newcommand{\etabar}{\bar{\eta}}%
\newcommand{\etatil}{\tilde{\eta}}%
\newcommand{\laeps}{{\lambda_\eps}}%
\newcommand{\gradp}{\nabla^\perp}
\newcommand{\Om}{\Omega}
\newcommand{\dist}{{\rm dist\, }}
\newcommand{\la}{\lambda}
\def\R{{\bf R}}
\def\ep{\varepsilon}
\def\lep{{|\mathrm{log }\ \ep|}}
\def\dist{{\rm dist}}
\def\O{\Omega}
\def\a0{\alpha_0}
\def\a{\alpha}
\def\g{\gamma}
\def\r0{\rho_{0}}
\def\la{\lambda}
\def\lep{{|\log \ep|}}
\newtheorem{thm}{Theorem}[section]
\newtheorem{lem}[thm]{Lemma}
\newtheorem{rem}[thm]{Remark}
\newcommand{\be}{\begin{equation}}
\newcommand{\ee}{\end{equation}}
\newcommand{\bea}{\begin{eqnarray}}
\newcommand{\eea}{\end{eqnarray}}
\newcommand{\beann}{\begin{eqnarray*}}
\newcommand{\eeann}{\end{eqnarray*}}
\newcommand{\nnn}{\nonumber}
\begin{document}
\title{Non existence of vortices in the small density region of a condensate}
\author{{\Large Amandine Aftalion}\footnote{CMAP, CNRS, Ecole Polytechnique, 91128 Palaiseau cedex, France}
 \and {\Large Robert L.Jerrard}\footnote{Dept. of Mathematics  University of Toronto, Toronto, Canada
M5S2E4.}
\and {\Large Jimena Royo-Letelier}\footnote{CMAP, Ecole Polytechnique, 91128 Palaiseau cedex, France.}}   \thispagestyle{empty}\maketitle

\begin{abstract}

In this paper, we answer a question raised by Len Pitaevskii and
prove that the ground state of the Gross Pitaevskii energy
describing a Bose Einstein condensate at low rotation does not have
vortices in the low density region. Therefore, the first ground
state with vortices has its vortices in the bulk. This is obtained
by proving that for small rotational velocities, the ground state is
multiple of the ground state with zero rotation. We rely on
sharp bounds of the decay of the wave function combined with
weighted jacobian estimates.
\end{abstract}

\section{Introduction}

Among the many experiments on Bose Einstein condensates, one
consists in rotating the trap holding the atoms in order to observe
a superfluid behaviour: the appearance of quantized vortices
\cite{AK,SB,PS,PiSi,A}. This takes place for sufficiently large
rotational velocities. On the contrary, at low rotation, no vortex
is detected in the bulk of the condensate. The system can be
described by a complex valued wave function minimizing a Gross
Pitaevskii type energy. A vortex corresponds to zeroes of the wave
function with phase around it. The density of the condensate is
significant in a region which is either a disk or an annulus, and
gets exponentially small outside this domain. Vortices are
experimentally visible in the bulk of the condensate. A question
raised by Len Pitaevskii is whether for small rotational velocity,
when there are no vortices in the bulk, vortices could exist in the
low density region.  For very large rotational velocities, when bulk
vortices are arranged on a triangular lattice, it has been shown
\cite{ABN2} that the vortex distribution extends to infinity.
Therefore, in this case, there are many vortices in the low density
region. It is then very natural to wonder whether vortices first
appear in the bulk or at infinity. It is experimentally and
numerically difficult to observe a vortex, which is a zero, in a low
density region. Mathematically this could not be achieved through
energy estimates or expansion since the contribution of a vortex in
a low density region is very small. In this paper, we introduce new
ideas to answer Pitaevskii's question and prove that at low
velocity, there are indeed no vortices in the condensate, even in
the low density region. Therefore, the first ground state with
vortices has its vortices in the bulk.

Since a condensate is a trapped object, the geometry of the trap
plays a role. An important special case is a radial harmonic
trapping potential $V(r)=r^2$. The space can then be split into two
regions, a region of the form $\cd=\{\lambda_0>V(r)\}$ (for a
suitable constant $\lambda_0$), where the wave function is
significant and the condensate is mainly located, and a region
$\R^2\setminus \cd$ where the modulus of the wave function is
exponentially small \cite{A}. In this latter region, it is very
difficult to determine mathematically the contribution of a vortex
to the energy. Ignat and Millot \cite{IM1,IM2} have determined the
critical rotational velocity $\Omega_c$ for the nucleation of the
first vortex inside $\cd$. This theorem does not describe the
behaviour in $\R^2\setminus \cd$. A natural question is whether for
$\Omega <\Omega_c$, the minimizer of the energy has zeroes in this
region, whether there is a smaller critical velocity than $\Omega_c$
where the minimizer is unique and vortex free. At very high
velocity, it has been proved in \cite{ABN2} that vortices exist up
to infinity so it seems reasonable that  at smaller velocity,
vortices may exist in the exponentially small region, far away from
the bulk and could arrange themselves on disks or arrays close to
infinity. In fact, we prove that this is not the case before
$\Omega_c$, namely that the minimizer is unique and does not vanish.
It means that for a large range of rotational velocities $\Omega$,
the minimizer is given by the same function.

 We consider here a two-dimensional setting and define the energy
for the complex-valued wave function $u$, such that $\int _{\R^2}
|u|^2=1$, as
\be\label{Eep}
E_\ep(u)= \int_{\R^2} \left\{  \frac12
|\nabla u|^2
  + {1\over 4\eps^2}|u|^4+ {1\over 2\eps^2}V(x)|u|^2
 - \Om x^\perp \cdot (iu,\nabla u) \right\} dx,
 \ee
 where $\Om$ is the angular velocity, $x=(x_1,x_2)$,
$x^\perp=(-x_2,x_1)$, $\eps>0$ is a small parameter, $V(x)$ is the
  trapping potential and $(iu,\nabla u)=iu\nabla u^*-iu^*\nabla u$.
The class of potentials includes the model case $V = x_1^2 + x_2^2$.
Then, the critical angular velocity for nucleation of vortices is of
order $\lep$ (see \cite{IM1}). An upper bound on the rotational
velocity is given by $\O<1/\eps$ when the confinement breaks down.
  The condensate is mostly concentrated in the
  region \be \cd := \{ x\in \R^2 : V <\lambda_0\} \label{cd.def}\ee
 where $\lambda_0$ is chosen so that
  \be \int_{\R^2} (\lambda_0-V(x))^+ dx = 1. \label{l0.def}\ee   We refer to \cite{A}
for more details on how this is derived from the physical
experiments.

In recent experiments in which a laser beam is superimposed upon the
magnetic trap holding the atoms, the trapping
 potential $V(x)$ is of a  different  type
\cite{R,SB,W,AM}:\be\label{pot-gaus} V (r)= r^2+V_0e^{-r^2/w_0}.
   \ee When the gaussian is expanded around the origin,
    this leads to a harmonic plus quartic potential \cite{SB}\be\label{pot-approx} V (r)=(1-b) r^2+
{k\over 4} r^4.
   \ee If $b$ is small ($b<1+(3k^2/4)^{1/3}$),
the domain $\cd$ given by (\ref{cd.def}) is a disc, while if
$b>1+(3k^2/4)^{1/3}$, it is an annulus.   According to the values of
$V_0$ and $w_0$ in the case of (\ref{pot-gaus}), the domain $\cd$
can also be a disk or an annulus.

In this paper, we consider potentials $V$ including $r^2$ and of the
type (\ref{pot-gaus}) or (\ref{pot-approx}) when the bulk $\cd$ is a
disk. In the case where $\cd$ is a disk, the potential $V$ is not
necessarily required to be increasing.

\subsection{Assumptions}\label{S:assumptions}

Throughout this paper, we make the following assumptions about the
potential $V$. First, \be V\mbox{ is nonnegative and radial, }\ V\in
C^1, \label{V1}\ee and \be \mbox{ there exists $c_0> 0, p\ge 2$ such
that } \frac 1{c_0} r^p \le V(r) \le c_0 r^p \quad\quad\mbox{ if }r
\ge c_0. \label{Vgrowth}\ee This assumption is easily seen to imply
that $E_\ep$ is bounded below for $|\Omega| \ll \frac 1 \ep$ and
that the angular momentum term $x^\perp \cdot (iu, \nabla u)$ 
is integrable as long as $u$
has finite energy. We will also use (\ref{Vgrowth}) to obtain decay
estimates that justify for example the integration by parts leading
to a decoupling of the energy. We fix $\lambda_0\in \R$ such that
(\ref{cd.def})-(\ref{l0.def}) hold. Such a $\lambda_0$ exists due to
the growth of $V$.  We further assume that the bulk $\cd$ is a disk
and not an annulus, that is $V$ is such that \be \cd = B_R(0)\mbox{
for some }R>0 \label{simplyconnected}\ee and that there exist
$\delta_0>0$ and a $C^1$ function $R: (-2\delta_0,2\delta_0)\to \R$
also denoted $R_\delta=R(\delta)$, such that \be
\begin{array}{l}
R_0 = R \,, \qquad \{ x : V(x)<\lambda_0+\delta \} = B_{R_{\d}}(0) \\ \\
\mbox{and} \quad  0< \frac 1C \le dR/d\lambda \le C \quad  \mbox{for
some constant   } C
\end{array}
\label{V2} \ee where $B_r(y)$ denotes the open ball of radius $r$
about $y$. This implies that $\lambda_0 - V$ is bounded away from
$0$ in the interior of $\cd$; in physical terms, this assumption
rules out the case of annular bulks and ``giant vortices" at low
angular velocities. We remark that the assumption above implies that
if $|x| \in (R_{-\d},R_{\d})$ and $0\le \d \le \d_0$ then
$\dist(x,\partial \cd) = \mathcal{O}(\delta)$.

We point out that assumptions (\ref{Vgrowth})  and (\ref{V2}) imply
that \be \mbox{ there exists $c_1>0$ such that $V(r)- \lambda_0 \ge
c_1(r^2 - R^2)$ for all  $r\ge R$.} \label{Vquad}\ee

Our assumptions include indeed potentials like $r^2$ or
(\ref{pot-approx}) for a disk case, and do not require $V$ to be
increasing.

\subsection{Main result}

Our main result is
\begin{thm}
Assume that $u_\ep$ minimizes $E_\ep(\cdot)$ with rotation $\Omega$,
and let $\etaeps$ denote the minimizer of $E_\ep(\cdot)$ for $\Omega
= 0$.  There exists $\ep_0,\omega_0, \omega_1>0$ such that if $0<\ep<\ep_0$
and $\Omega\le \omega_0 |\log\ep| - \omega_1 \log|\log\ep|$ then $u_\ep
= e^{i\alpha} \etaeps$  in $\R^2$ for some constant $\alpha$.
\label{Tfull}\end{thm}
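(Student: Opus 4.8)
The plan is to substitute $u = \etaeps v$, use the Euler--Lagrange equation for $\etaeps$ to split off an auxiliary functional, and then reduce the whole statement to a coercivity inequality in which the rotation term is controlled by a \emph{weighted} Jacobian estimate matched to the sharp decay of $\etaeps$. Since $\etaeps$ is real, positive, and solves $-\Delta\etaeps + \ep^{-2}(\etaeps^2 + V)\etaeps = \laeps\etaeps$ for a Lagrange multiplier $\laeps$, writing $u=\etaeps v$ and integrating the cross terms by parts makes all terms linear in $|v|^2-1$ collapse into $\tfrac{\laeps}2\int\etaeps^2(|v|^2-1)$, which vanishes because $\int\etaeps^2|v|^2=\int\etaeps^2=1$. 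This gives the exact decoupling
\be
E_\ep(\etaeps v) = E_\ep(\etaeps) + F_\ep(v) - \Om\, G_\ep(v),
\ee
\be
F_\ep(v) = \int_{\R^2}\Big(\tfrac12\etaeps^2|\nabla v|^2 + \tfrac1{4\ep^2}\etaeps^4(|v|^2-1)^2\Big)\,dx, \qquad
G_\ep(v) = \int_{\R^2}\etaeps^2\, x^\perp\!\cdot(iv,\nabla v)\,dx .
\ee
Because $\etaeps$ itself (that is, $v\equiv1$) is admissible for the rotating problem, minimality of $u_\ep$ forces $F_\ep(v_\ep)-\Om\,G_\ep(v_\ep)\le0$ for $v_\ep=u_\ep/\etaeps$. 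As $F_\ep(v)\ge0$ with equality iff $v$ is a unimodular constant, the theorem reduces to the strict coercivity $\Om\,|G_\ep(v)|\le(1-\delta)\,F_\ep(v)$ for some $\delta>0$ and all admissible $v$, valid throughout the stated range of $\Om$.

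Second, I would recast $G_\ep$ as a weighted Jacobian. Since $\etaeps$ is radial, the field $\etaeps^2 x^\perp$ is divergence free and equals $\np\zeta$ with $\zeta(r)=\int_0^r s\,\etaeps^2(s)\,ds$; integrating by parts and using the decay of $\etaeps$ to kill the boundary terms,
\be
G_\ep(v) = -\int_{\R^2}\zeta\,\curl(iv,\nabla v)\,dx ,
\ee
where $\curl(iv,\nabla v)$ is twice the Jacobian of $v$ and concentrates, with weight proportional to the local winding number, at the zeros of $v$. Thus controlling the rotation amounts to bounding $\int\zeta\,\curl(iv,\nabla v)$ by $F_\ep(v)$.

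Third --- and this is the crux --- I would prove the weighted Jacobian estimate. The mechanism is the standard vortex-ball lower bound, but now everything is weighted by $\etaeps^2$, which degenerates exponentially outside $\cd$. Two ingredients are required: (a) \emph{sharp} pointwise control of $\etaeps$, namely the Thomas--Fermi behaviour $\etaeps^2\approx(\lambda_0-V)^+$ inside $\cd$ and the precise Agmon-type exponential decay governed by (\ref{Vquad}) outside, together with the induced lower bounds on $\zeta$; and (b) a localized lower bound showing that a zero of $v$ of degree $d$ near a point $a$ costs at least about $\pi|d|\,\etaeps^2(a)\,\lep$ in $F_\ep$ (the core scale being fixed by the local healing length), while contributing at most about $2\pi|d|\,\zeta(a)$ to $\int\zeta\,\curl(iv,\nabla v)$. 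Combining (a)--(b), the ratio of rotational gain to energetic cost of a defect at radius $r$ is governed by $2\zeta(r)/(\etaeps^2(r)\,\lep)$, and it is precisely the sharp decay of $\etaeps$ that lets one bound this ratio and sum over possibly many defects; the loss incurred in that summation is the origin of the correction $-\omega_1\log|\log\ep|$ in the admissible range of $\Om$. I expect the main obstacle to be exactly the region $\R^2\setminus\cd$: there both $\etaeps^2$ and $\zeta$ are exponentially small, so crude energy bookkeeping cannot exclude a cheap far-field vortex, and the argument must exploit the \emph{exact} decay rate of $\etaeps$ rather than its mere smallness (this also justifies, via elliptic and maximum-principle bounds on $u_\ep$, that $v_\ep$ is regular enough for the Jacobian machinery to apply).

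Finally, inserting the weighted Jacobian estimate into the coercivity inequality and recalling $F_\ep(v_\ep)\le\Om\,G_\ep(v_\ep)$, I would obtain $\delta\,F_\ep(v_\ep)\le0$, hence $F_\ep(v_\ep)=0$. Since $\etaeps>0$ everywhere, this forces $\nabla v_\ep\equiv0$ and $|v_\ep|\equiv1$, so that $v_\ep=e^{i\alpha}$ is a unimodular constant and $u_\ep=e^{i\alpha}\etaeps$ on $\R^2$, which is the assertion of the theorem.
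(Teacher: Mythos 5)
Your overall architecture coincides with the paper's (Lassoued--Mironescu splitting, rewriting the rotation term as a weighted Jacobian, sharp decay of $\etaeps$, critical speed read off from the ratio of rotational gain to vortex cost), but there is a decisive error in your step two. You take the primitive $\zeta(r)=\int_0^r s\,\etaeps^2(s)\,ds$. This function does \emph{not} vanish at infinity: it increases to $\int_0^\infty s\,\etaeps^2\,ds=\frac1{2\pi}\int_{\R^2}\etaeps^2=\frac1{2\pi}$, so your later assertion that ``both $\etaeps^2$ and $\zeta$ are exponentially small'' outside $\cd$ is false for your $\zeta$. Two things then break. First, the identity $G_\ep(v)=-\int\zeta\,\nabla\times(iv,\nabla v)$ is not justified: the two primitives differ by a constant $c$, and the difference term $c\int_{\R^2}\nabla\times (iv,\nabla v)$ is a boundary flux at infinity which equals $4\pi c d\neq 0$ whenever the phase of $v$ winds, precisely the situation you must control. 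Second, and fatally, your own gain-to-cost ratio $2\zeta(r)/(\etaeps^2(r)|\log\ep|)$ blows up exponentially as $r\to\infty$ with this weight, so a far-field vortex looks enormously favorable and no coercivity inequality $\Omega|G_\ep(v)|\le(1-\delta)F_\ep(v)$ can hold --- your scheme therefore cannot exclude vortices in the low-density region, which is exactly the content of the theorem. The paper's choice is the tail primitive $\xi_\ep(r)=\int_r^\infty s\,\etaeps^2(s)\,ds$, for which it proves (see (\ref{fep1})) that $f_\ep=\xi_\ep/\etaeps^2\le C\ep^{2/3}$ off $\cd$; this boundedness of $f_\ep$ is also what legitimizes the integration by parts (Section \ref{S:split}), and in the far field the rotation term is then absorbed \emph{pointwise} via $\Omega f_\ep\le\frac14$ and $|Jv|\le\frac12|\nabla v|^2$, with no ball construction at all in the degenerate region. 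That choice of primitive, together with the decay estimates making $f_\ep$ small outside the bulk, is the heart of the paper and is absent from your plan.

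There is a second gap at the endgame. Any weighted Jacobian or vortex-ball lower bound --- your ingredient (b), or the paper's Lemma \ref{L.newJacest} --- carries additive error terms (in the paper, $C\ep^\beta(1+\cdots)$), so applied to the minimizer it yields $F_\ep(v_\ep)\le{}$small, never $F_\ep(v_\ep)\le 0$; your final line ``$\delta\,F_\ep(v_\ep)\le 0$, hence $F_\ep(v_\ep)=0$'' presumes an error-free coercivity valid for all admissible $v$ that the machinery you invoke does not deliver. The paper needs a separate rigidity step (Theorem \ref{T1}): smallness of the energy excess plus the gradient bound (\ref{etaeps3}) gives $|v_\ep|\ge\frac12$ on the bulk; then, setting $w=v/|v|$ and using $Jw=0$, the rotation term is rewritten as $2\Omega\int\nabla^\perp(\chi\xi_\ep)\cdot\left[(iv,\nabla v)-(iw,\nabla w)\right]$, bounded by $C\ep|\log\ep|^{15/4}A_2$, and this self-improves to $A_2=0$ exactly, whence $v$ is a unimodular constant. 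This passage from ``small'' to ``zero'' is indispensable for the uniqueness statement $u_\ep=e^{i\alpha}\etaeps$ and is missing from your proposal. A lesser remark: the paper deliberately avoids explicit vortex balls, quoting the weighted estimate of Lemma \ref{L.newJacest} with weight $\chi\xi_\ep$ and rescaled parameter $\tilde\ep$; your weighted-ball route is the older, workable-but-longer path of \cite{AAB,IM1}, and in any case it must be confined to the bulk where $\etaeps$ is bounded below --- in the exponentially degenerate exterior the per-vortex $|\log\ep|$ lower bounds are not available, which is another reason the pointwise far-field argument with $f_\ep$ small is the right tool there.
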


In the pure quadratic case  $V= r^2$, Ignat and Millot \cite{IM1,
IM2} have shown that the {\em bulk} of the condensate (that is any
domain contained in $\cd$) is vortex-free for $|\Omega| \le
\omega_0|\log \ep| -\omega_1 \log|\log\ep|$, for some
$\omega_1>0$ and the same constant $\omega_0$ that we find in Theorem
\ref{Tfull} . They have no information on what happens in
$\R^2\setminus \cd$. Our theorem proves that vortices do not lie in
$\R^2\setminus \cd$. They have also shown that there exists
$\delta>0$ such that the ground state has at least one vortex in the
bulk if $\Omega \ge \omega_0|\log \ep| + \delta \log|\log\ep|$. In
this sense, our estimate $|\Omega|\le \omega_0|\log \ep| -\omega_1
\log|\log\ep|$ captures the sharp leading-order term, and the
correct scaling of the next-order term, of the critical velocity for
vortex formation. We point out that our arguments also deal with
more general potentials. The arguments used in \cite{IM1} to prove
the existence of interior vortices for rotations greater than $
\omega_0|\log \ep| + \delta \log|\log\ep|$ should extend with few
changes to the more general potentials considered here, using
results about auxiliary functions that we establish in Section 3 in
place of parallel results from \cite{IM1}. Thus the constant
$\omega_0$ should also be sharp for these more general potentials.

We split the proof into two independent results. The first main
result of this paper asserts roughly speaking that {\em symmetry
breaking occurs first in the interior of $\cd$}: if $\Omega$ is
small enough that there are no vortices in $\cd$, then there are no
vortices anywhere.

\begin{thm}
Assume that $u_\ep$ minimizes $E_\ep(\cdot)$ with rotation $\Omega$, and let $\etaeps$ denote
the minimizer of $E_\ep(\cdot)$ for $\Omega = 0$. Assume also that $\Omega\le C |\log\ep|$ for some $C$.

There exists $\ep_0>0$ such that if $0<\ep<\ep_0$ and  $\Omega$ is subcritical in the sense that
\be
|u_\ep| \ge \frac 12 \eta_\ep\mbox{ in }\cd_1 := \{x\in \cd: \mbox{dist}(x,\partial \cd) \ge |\log\ep|^{-3/2}\}
\label{subcrit}\ee
then $u_\ep = e^{i\alpha} \etaeps$  in $\R^2$ for some constant $\alpha$.
\label{T1}\end{thm}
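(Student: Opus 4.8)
The plan is to reduce the problem to a weighted functional for $v:=u_\ep/\etaeps$ and to show that rotation can never push its energy below that of a constant; the heart of the matter is a sharp control of a \emph{nonlocal} stream function built from $\etaeps$, which alone explains why vortices cannot appear in the exponentially small region.

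First I would record the Lassoued--Mironescu type splitting. Writing $\eta:=\etaeps$ (real, positive, radial, mass one) and $u_\ep=\eta v$, and using the Euler--Lagrange equation for $\eta$ together with the decay estimates of Section 3 to justify the integration by parts (this is where (\ref{Vgrowth}) is used to kill the boundary terms at infinity and the Lagrange multipliers cancel, since both $u_\ep$ and $\eta$ have unit mass), one obtains
\be
E_\ep(u_\ep)-E_\ep(\eta)=\int_{\R^2}\Big(\frac12\,\eta^2|\nabla v|^2+\frac1{4\ep^2}\eta^4(|v|^2-1)^2\Big)-\Omega\int_{\R^2}\eta^2\,x^\perp\cdot(iv,\nabla v)=:G_\ep(v).
\ee
Since $v\equiv1$ is admissible (it gives mass $\int\eta^2=1$) with $G_\ep(1)=0$, minimality of $u_\ep$ gives $G_\ep(v)\le0$. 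Hence the theorem is equivalent to $G_\ep(v)\ge0$, with equality forcing $v$ to be a constant of modulus one (the modulus being fixed by the constraint $\int\eta^2|v|^2=1$).

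Second, the decisive manipulation rewrites the rotation term through a weighted Jacobian. As $\eta$ is radial one has $\eta^2x^\perp=\nabla^\perp\zeta$ with $\zeta(r)=\int_0^r s\,\eta^2(s)\,ds$; crucially one must subtract the value at infinity and work with $\tilde\zeta(r):=\zeta(r)-\frac1{2\pi}=-\int_r^\infty s\,\eta^2(s)\,ds$, which still satisfies $\nabla^\perp\tilde\zeta=\eta^2x^\perp$ but now decays. Integrating by parts,
\be
\Omega\int_{\R^2}\eta^2\,x^\perp\cdot(iv,\nabla v)=2\Omega\int_{\R^2}\tilde\zeta\,Jv,\qquad Jv:=\frac12\curl\,(iv,\nabla v).
\ee
I would emphasize why $\tilde\zeta$ and not $\zeta$ is the right object: a single unit vortex placed far out contributes $\approx\pi\zeta(\infty)=\frac12$ to $\int\zeta\,Jv$, an apparent gain of order $\Omega\sim\lep$; this spurious gain is precisely the boundary term at infinity generated by the net winding, and keeping it (equivalently, replacing $\zeta$ by $\tilde\zeta$) makes it cancel. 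With $\tilde\zeta$ the genuine rotational gain of a far vortex at $a$ is only $2\pi\Omega|\tilde\zeta(a)|=2\pi\Omega\int_{|a|}^\infty s\,\eta^2$, which is exponentially small. This cancellation is the mechanism that answers Pitaevskii's question, and it is invisible to crude energy expansions.

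Third — and this is the main obstacle — I would prove the master inequality $2\Omega\big|\int_{\R^2}\tilde\zeta\,Jv\big|\le\int_{\R^2}\big(\frac12\eta^2|\nabla v|^2+\frac1{4\ep^2}\eta^4(|v|^2-1)^2\big)$, strict unless $v$ is constant, by marrying the sharp decay of $\eta$ (hence of $\tilde\zeta$) to a weighted Jacobian estimate. The naive pointwise bound $2\Omega|\tilde\zeta|\,|Jv|\le\frac12\eta^2|\nabla v|^2$ needs $2\Omega|\tilde\zeta|\le\eta^2$, which fails in the bulk where $\tilde\zeta=O(1)$, $\eta^2=O(1)$ and $\Omega\sim\lep$, so the argument must be region dependent. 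In $\cd_1$ the subcriticality hypothesis (\ref{subcrit}) forbids zeros, so $Jv=\rho\,\nabla\rho\times\nabla\phi$ with $\rho=|v|$, and the strong penalty $\frac1{4\ep^2}\eta^4(\rho^2-1)^2$ (with $\eta^2$ bounded below on $\cd_1$) forces $\rho$ so close to $1$ that $\nabla\rho$, hence $Jv$, is negligible there. In the thin layer $\cd\setminus\cd_1$ and outside $\cd$, where vortices are a priori possible, a vortex-ball/weighted Jacobian estimate shows that a vortex of degree $d$ near $a$ carries a positive weighted cost governed by $\eta^2(a)$ while gaining only $2\pi\Omega|d|\,|\tilde\zeta(a)|$; the sharp decay estimates give $|\tilde\zeta(a)|\ll\eta^2(a)$ once $\dist(a,\partial\cd)\gtrsim\lep^{-3/2}$ — this is exactly why the width $\lep^{-3/2}$ is built into the definition of $\cd_1$ — so the cost dominates the gain whenever $\Omega\le C\lep$. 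Summing, $G_\ep(v)\ge0$; combined with $G_\ep(v)\le0$ this yields $G_\ep(v)=0$, and tracking the equality cases forces the absence of vortices, then $\nabla\rho\equiv0$ and $\nabla\phi\equiv0$. Hence $v\equiv e^{i\alpha}$ and $u_\ep=e^{i\alpha}\etaeps$.
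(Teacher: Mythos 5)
Your skeleton matches the paper's (the Lassoued--Mironescu splitting, the decaying stream function --- your $\tilde\zeta$ is exactly the paper's $-\xi_\ep$ with $\xi_\ep(r)=\int_r^\infty s\etaeps^2$ --- a region-dependent treatment of the Jacobian term, and the use of (\ref{subcrit}) in $\cd_1$), but both of your regional mechanisms fail as stated. Off the bulk, you invoke a weighted vortex-ball estimate giving a cost ``governed by $\eta^2(a)$'': no such bound is available there, and this is precisely the difficulty the paper is built to avoid. In $\R^2\setminus\cd$ the density obeys (\ref{etaeps1}), so $\etaeps(a)$ drops below $\ep$ at moderate distances; the effective Ginzburg--Landau parameter $\ep/\etaeps(a)$ is then not small, a zero of $v$ there carries no $\pi\etaeps^2(a)|\log\ep|$ cost, and the weight $\etaeps^2$ varies by exponentially large factors across any ball, so the ball construction degenerates. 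The fix is the ``naive'' pointwise bound you wrote down and then set aside: it fails only in the bulk, and it is all that is needed off the bulk. By (\ref{fep1}), $f_\ep=\xi_\ep/\etaeps^2\le C(\ep^{2/3}+\lep^{-3/2})$ on the complement of $\cd_1$, so with $\Omega\le C\lep$ one gets $\Omega f_\ep\le\frac14$ and hence $4\Omega f_\ep|Jv|\le|Jv|\cdot 1\le\frac12|\nabla v|^2$ pointwise; the rotation term is absorbed into half the kinetic term with no vortex analysis at all. This also corrects your explanation of the width $\lep^{-3/2}$: in the layer $\cd\setminus\cd_1$ the distance to $\partial\cd$ is \emph{at most} $\lep^{-3/2}$ (not at least), and what matters is $\Omega\cdot\lep^{-3/2}\to0$, i.e.\ $\Omega f_\ep\ll1$ there, not a comparison of $|\tilde\zeta(a)|$ with $\etaeps^2(a)$ alone.

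Inside $\cd_1$ there is a genuine gap: the claim that the penalty ``forces $\rho$ so close to $1$ that $\nabla\rho$, hence $Jv$, is negligible'' is circular, because you have no a priori smallness of the energy --- minimality only gives $F_\ep(v)\le0$, so the penalty term is controlled by the very rotation term you are trying to dominate. The missing idea, which is the crux of the paper's proof, is a self-improving (absorption) estimate based on comparing $Jv$ with $Jw$ for $w=v/|v|$, well defined on $\cd_1$ by (\ref{subcrit}) and satisfying $Jw=0$. Writing $A_2=2\Omega\int\chi\,\xi_\ep\,Jv$ with $\chi$ a cutoff adapted to $\cd_1$, $\|\nabla\chi\|_\infty\le2\lep^{3/2}$, one integrates by parts to move the derivative onto the smooth function $\chi\xi_\ep$:
\be
A_2=2\Omega\int\nabla^\perp(\chi\xi_\ep)\cdot\bigl[(iv,\nabla v)-(iw,\nabla w)\bigr],\qquad
\bigl|(iv,\nabla v)-(iw,\nabla w)\bigr|=\frac{|\rho^2-1|}{\rho}\,|\rho\nabla\phi|\le 2\,\bigl||v|^2-1\bigr|\,|\nabla v|,
\nonumber\ee
using $\rho\ge\frac12$. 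A Cauchy--Schwarz at scale $\tilde\ep=\ep/\inf_{\cd_1}\etaeps\le C\ep\lep^{3/4}$, together with $A_1+B\le A_2$ (so the $\cd_1$-energy is itself bounded by $C\lep^{3/2}A_2$) and $\|\nabla(\chi\xi_\ep)\|_\infty\le C\lep^{3/2}$ as in (\ref{chi-xi}), yields $A_2\le C\ep\lep^{15/4}A_2$, forcing $A_2=0$, then $A_1=B=0$, and only then $v\equiv e^{i\alpha}$. Without this loop your ``master inequality'' (and in particular its claimed strictness unless $v$ is constant) is asserted rather than proved; note also that your regional estimates, even if repaired, produce non-strict inequalities, so the equality-case bookkeeping you rely on at the end has no source of strictness --- in the paper the conclusion comes from the absorption argument, not from tracking equality cases.
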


Our second main  theorem gives an estimate for the critical value of $\Omega$. 
The statement of the theorem refers to an auxiliary function $f_0$:
let $$a(x)=\lambda_0-V(x),$$
\begin{equation}
\eta_0 := \sqrt {a^+}, \quad\quad \xi_0(r) = \int_r^\infty s \eta_0^2(s) \ ds, 
\quad\quad f_0(r) = \left\{\begin{array}{ll}0&\mbox{if }r\ge R\\
\xi_0(r)/ \eta_0^2(r) &\mbox{if }r \le R;\end{array}\right.
\label{f0.def}\end{equation}

\begin{thm}
Let $\omega_0 = \frac 1{2 \|f_0\|_\infty}$. There exists
$\omega_1>0$ and $\ep_1>0$ such that if $|\Omega| \le  \omega_0|\log
\ep| -\omega_1 \log|\log\ep|$ and $0< \ep<\ep_1$, then $\Omega$ is
subcritical in the sense of (\ref{subcrit}), and  the conclusion of
Theorem \ref{T1} thus holds. \label{T2}\end{thm}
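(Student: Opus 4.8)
The plan is to prove the subcriticality bound (\ref{subcrit}); once $|u_\ep|\ge\frac12\eta_\ep$ on $\cd_1$ is established, Theorem \ref{T1} immediately gives $u_\ep=e^{i\a}\eta_\ep$, so nothing more is needed. I would begin with the standard Lassoued--Mironescu decoupling: writing $u_\ep=\eta_\ep v$ (legitimate since $\eta_\ep>0$) and using the Euler--Lagrange equation for the rotationless minimizer $\eta_\ep$ together with the common mass constraint $\int\eta_\ep^2|v|^2=\int\eta_\ep^2=1$, the energy splits exactly as $E_\ep(u_\ep)=E_\ep(\eta_\ep)+\tilde E_\ep(v)$, where $\tilde E_\ep(v)=\int\eta_\ep^2\big(\frac12|\nabla v|^2+\frac{\eta_\ep^2}{4\ep^2}(1-|v|^2)^2\big)-\Om\int\eta_\ep^2\,x^\perp\cdot(iv,\nabla v)$. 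Minimality of $u_\ep$ forces $\tilde E_\ep(v)\le 0$. I then argue by contradiction: assuming (\ref{subcrit}) fails, so that $|v(x_0)|<\frac12$ for some $x_0\in\cd_1$, I aim to show that the hypothesis on $\Om$ makes $\tilde E_\ep(v)$ strictly positive.

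The first step is a sharp lower bound for the weighted Ginzburg--Landau part. A drop of $|v|$ below $\frac12$ at an interior point forces either large potential energy or nonzero local degree, and a weighted vortex-ball construction produces a finite family of balls covering the set $\{|v|\le\frac12\}$, with degrees $d_i$ at centers $a_i$, carrying weighted energy at least $\pi\sum_i|d_i|\,\eta_0^2(a_i)\,(\lep-C\log\lep)$; here the weight $\eta_0^2(a_i)\approx\eta_\ep^2(a_i)$ is the local density. The role of the precise threshold $\dist(x,\partial\cd)\ge\lep^{-3/2}$ defining $\cd_1$ is exactly to keep $\eta_0^2(a_i)$ bounded below by a multiple of $\lep^{-3/2}$ there, so that the local healing length stays comparable to $\ep$ and the logarithmic cost keeps its leading constant $\pi\eta_0^2(a_i)\lep$ with only an $O(\eta_0^2(a_i)\log\lep)$ error.

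The second step rewrites the rotation term through the same weighted vorticity. Since $\eta_\ep^2x^\perp$ is divergence free I write it as $\np\Xi_\ep$ with $\Xi_\ep'(r)=r\eta_\ep^2(r)$, and integrating by parts turns the rotation term into a pairing of $\Xi_\ep$ against $\mathrm{curl}\,(iv,\nabla v)$, i.e. against the Jacobian of $v$. Using the decay estimates of Section~3 to compare $\eta_\ep$ with $\eta_0$ one has $\Xi_\ep\approx-\xi_0$, so (after fixing the normalization that yields the stated $\omega_0$) a vortex of unit degree at $a$ contributes a rotational gain of at most $2\pi\Om\,\xi_0(a)=2\pi\Om\,f_0(a)\,\eta_0^2(a)\le 2\pi\Om\,\|f_0\|_\infty\,\eta_0^2(a)$, the last step being the only place where $\|f_0\|_\infty$ enters. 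The integration by parts is global, but outside $\cd$ one has $\xi_0\equiv 0$, so vortices there carry cost and no gain and only help; the exponentially small far field of $\eta_\ep$ makes the remaining boundary/tail terms negligible.

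Combining the two steps and summing over the balls, I expect an inequality of the form $0\ge\tilde E_\ep(v)\ge\pi\sum_i|d_i|\,\eta_0^2(a_i)\big(\lep-2\|f_0\|_\infty\Om-C\log\lep\big)$, using $f_0(a_i)\le\|f_0\|_\infty$. Inserting $\Om\le\omega_0\lep-\omega_1\log\lep$ with $\omega_0=\frac1{2\|f_0\|_\infty}$ makes each bracket at least $(2\|f_0\|_\infty\omega_1-C)\log\lep$, which is strictly positive once $\omega_1$ is chosen large enough to dominate the error constant $C$ — the desired contradiction. This is precisely where the $-\omega_1\log\lep$ correction in the hypothesis is needed: it is the slack that absorbs the $O(\log\lep)$ error in the vortex-cost lower bound. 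The main obstacle, and the technical heart of the argument, is establishing that weighted vortex-ball lower bound with the \emph{sharp} leading constant $\pi\eta_0^2(a)$ \emph{uniformly} down to the degenerate edge where $\eta_0^2$ is as small as $\lep^{-3/2}$, while holding the error to $O(\log\lep)$; controlling the $\eta_\ep$-versus-$\eta_0$ comparison and the far-field rotation term to the same precision is the other delicate ingredient.
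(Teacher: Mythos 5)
Your high-level plan --- establish the subcriticality condition (\ref{subcrit}) and then invoke Theorem \ref{T1} --- is exactly the paper's, and your identification of where $\|f_0\|_\infty$ enters (through $\xi_0=f_0\eta_0^2$ after integrating the rotation term by parts) is correct. But your engine, a weighted vortex-ball construction with sharp constant $\pi\eta_0^2(a_i)$, is the route of the earlier literature (\cite{IM1,AAB}) that this paper explicitly set out to avoid. The paper instead applies the weighted Jacobian estimate, Lemma \ref{L.newJacest} from \cite{J}, with $\phi=\chi\,\xi_\ep$, with $\tilde\ep=\ep/\inf_{\cd_1}\etaeps$ in place of $\ep$, and --- the one subtle choice --- $\kappa=1+c_1\log\lep/\lep$. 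This yields $|A_2|\le 2\Om\kappa\,\|f_\ep\|_\infty\,|\log\tilde\ep|^{-1}A_1+C\ep^\beta$ with $\beta=(\kappa-1)/100$; combined with $A_1\le A_2$ (from $F_\ep(v)\le 0$ and $B\ge 0$) and $\|f_\ep\|_\infty\le(1+C\ep^{1/3})\|f_0\|_\infty$ (estimate (\ref{f1})), the hypothesis on $\Om$ forces $c_1^2(\log\lep/\lep)^2A_1\le C\lep^{-c_1/100}$, hence $A_1\le C\lep^{-2}$ for $c_1=400$, and then $|v|\ge 1-C\lep^{-1}$ on $\cd_1$ via (\ref{cd1.est}) and the Lipschitz bound on $v$. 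No balls, no degrees, no contradiction argument: this is why the paper's proof is short, whereas your route must (as you yourself concede) reprove a sharp weighted ball lower bound with error $O(\eta_0^2\log\lep)$ uniformly down to weights of size $\lep^{-3/2}$ --- known to be achievable for quadratic $V$ from \cite{IM1}, but the hardest step of your plan and one the paper renders unnecessary.

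Beyond the difference in route, there is a genuine hole in your contradiction as stated. Failure of (\ref{subcrit}) gives a point $x_0\in\cd_1$ with $|v(x_0)|<\frac12$, but this forces no nonzero winding: if all $d_i=0$, your culminating inequality $0\ge\pi\sum_i|d_i|\eta_0^2(a_i)(\lep-2\|f_0\|_\infty\Om-C\log\lep)$ reads $0\ge 0$ and excludes nothing, yet a degree-free dip is precisely a scenario that must be ruled out. Quantitatively such a dip costs weighted Ginzburg--Landau energy only of order $\etaeps^2(x_0)\ge c\,\lep^{-3/2}$ --- far below the $\eta_0^2\lep$ scale your ball bound addresses --- and to see that the rotation term cannot pay even this small cost you need a fractional bound of the form $|A_2|\le(1-\delta)\cdot(\mbox{GL energy})+C\ep^\beta$ with $\delta$ of order $\log\lep/\lep$, which is exactly the content of the paper's application of Lemma \ref{L.newJacest}; once you have that bound the vortex balls become redundant, and without it your zero-degree branch does not close. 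So your sketch is completable along classical lines, but the dichotomy ``large potential energy or nonzero local degree'' must be made quantitative, and the zero-degree branch handled separately as in the passage from (\ref{cd1.est}) to (\ref{mod}).
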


In our proof of Theorem \ref{T2}, as in estimates of the critical
rotation in works such as  \cite{IM1} and \cite{AAB}, a main point
is to obtain sharp energy lower bounds. In all earlier works that we
know of, this is done using the vortex ball construction originally
introduced by \cite{je} and \cite{sa}. In our proof of Theorem
\ref{T2}, we avoid any explicit\footnote{However, the proof of Lemma
\ref{L.newJacest}, see Lemma 8 in \cite{J}, ultimately relies on a vortex ball
construction appearing in \cite{js}.} mention of vortex balls by
instead appealing to a result from \cite{J}, stated here as Lemma
\ref{L.newJacest}. This makes our argument considerably shorter than
those in \cite{AAB, IM1} and other references.

We point out that the results of \cite{IM1,  IM2}  do {\em not}
directly imply that Theorem \ref{T2} holds in the case $V = r^2$,
although it is possible that this conclusion can be extracted with
relatively little effort from arguments in these references.

\subsection{Main ideas of the proof}

 The energy minimizers with
$\O=0$ provide real solutions to the Euler-Lagrange equations: when
$\O=0$, $E_\ep(\eta)=G_\ep(\eta)$, where
\be\label{G}
G_\ep(\eta)= \int_{\R^2} \left\{  \frac12 |\nabla \eta|^2
 +{1\over 4\ep^2}|\eta|^4+ {1\over 2\ep^2}V(x)|\eta|^2
\right\}
  dx.
\ee Our main goal consists in proving that up to the critical
velocity of nucleation of bulk vortices, the minimizer of $E_\ep$
with velocity $\Omega$ is in fact equal to $\etaeps$.

The minimizer $\etaeps$ of $G_\ep$ under the $L^2$ constraint of
norm 1, is (up to a complex multiplier of modulus one) the unique
positive solution of \be\label{1-1} -\Delta \etaeps +
{1\over\ep^2}\etaeps(V(x)+\etaeps^2) = {1\over \ep^2} \lambda_\ep
\etaeps \ee where ${1\over \ep^2} \lambda_\ep$ is the Lagrange
multiplier, which is also necessarily unique. Moreover, $\lambda_\ep
\to \lambda_0$, and  $\etaeps ^2$ converges to $a^+$ in $L^2(\DD)$
and uniformly on any compact set of $\cd$. We will need some
estimates on the decay of $\etaeps$ at infinity that we prove in
section 2.

By a remarkable identity (see  Lassoued \& Mironescu \cite{LM}), for
any $u$, the energy $E_\ep$ for any $\Omega$ splits into two parts,
the energy $G_\ep(\etaeps)$ of the density profile and a reduced
energy of the complex phase $v=u/\etaeps$:
 \be\label{split}
  E_\ep(u)=G_\ep(\etaeps)+F_\eps(v)
\ee \be\label{Geps} \hbox{where}\quad
F_\eps(v)= \int_{\R^2} \left\{
{\etaeps^2\over 2} |\nabla v|^2
+ {\etaeps^4\over 4\eps^2} (|v|^2-1)^2-
 \etaeps^2 \Om { x}^\perp\cdot (iv,\nabla v) \right\}\ dx.
\ee In particular the potential $V(x)$ only appears in $G_\ep$. We will recall the
proof of (\ref{split}), as well as that of (\ref{Fep}) below, in Section \ref{S:split}.
This kind of splitting of the energy is by now standard in the rigorous analyses
of functionals such as $E_\ep$.

Next, define \be\label{eqxi}
 \xi_\ep (r)= \int_r^\infty s\etaeps^2 (s)\ ds,
\ee so that  $\gradp \xi_\ep = x^\perp \etaeps^2$. An integration by
parts 
yields
\be\label{Fep}F_\eps(v)= \int_{\R^2} \left\{ {\etaeps^2\over 2}
\left (|\nabla v|^2-\frac{4\O\xi_\ep}{\etaeps^2} Jv \right ) +
{\etaeps^4\over 4\eps^2} (|v|^2-1)^2 \right\}\ dx\ee where $J v =
\frac 12 \nabla\times (iv, \nabla v) = (i v_{x_1}, v_{x_2})$ is the
Jacobian.

We recall that the function $f_\ep := \xi_\ep/\etaeps^2$ appearing
in $F_\ep$ is important since it is well known that vortices in the
interior of $\cd$ first appear near where this function attains a
local maximum \cite{A,AAB,IM1,IM2}; its importance is also clear from
(\ref{Fep}), since it controls the relative strength of the positive and
negative contributions to $F_\ep$. The proofs of Theorems \ref{T1}
and \ref{T2} rest on new bounds for $f_\ep$ in $\R^2\setminus \cd$
and near $\partial \cd$, which in
turn rely on decay estimates for $\etaeps$. In particular, we show in Lemma \ref{lem:f}
that $f_\ep \le C\ep^{2/3}$ in $\R^2\setminus \cd$.


The other part of the proof consists essentially of bounds of
$2\Omega \int \etaeps^2 f_\ep Jv$ by the positive terms in $F_\ep$.
Away from the bulk,  we use our estimates of $f_\ep$
to find that $2 \Omega f_\ep Jv$ is bounded pointwise by $\frac 12 |\nabla v|^2$.
In the bulk, where $\etaeps^2$ is not too small,  we have
\[
\frac 12 \etaeps^2|\nabla v|^2 + \frac {\etaeps^4}{4\ep^2}(|v|^2-1)^2 \ge
 \etaeps^2[ \frac 12|\nabla v|^2 + \frac 1 {4\tilde \ep^2}(|v|^2-1)^2]
\]
for some $\tilde\ep$ such that $|\log \tilde \ep| = | \log\ep| (1+o(1))$. We obtain
the desired bounds by combining this
with a weighted Jacobian estimate mentioned above, Lemma  \ref{L.newJacest},
which {\em directly} implies that
\[
2\Omega \int  \chi \etaeps^2 f_\ep Jv
\le
 \Omega\left(\frac {2\|f_\ep\|_{\infty}}{\log \tilde \ep|}\right)\int \chi \etaeps^2  [ \frac 12|\nabla v|^2 + \frac 1 {4\tilde \ep^2}(|v|^2-1)^2] +
\mbox{ small error terms}
\]
where $\chi$ is a cutoff function supported in the bulk. Note that the leading-order critical rotation
$\omega_0$ is such that $\Omega\left(\frac {2\|f_\ep\|_{\infty}}{\log \tilde \ep|}\right)\approx \Omega/ \omega_0|\log\ep|$. The proof of Theorem \ref{T2} is completed by assembling these ingredients and
controlling error terms. The proof of Theorem \ref{T1} relies on an additional ingredient, which is
that if $|v|\ge \frac 12$ in an open set $U$, then $Jv$ is extremely close in $U$ to $J(\frac v{|v|})=0$.
Theorem \ref{Tfull} follows immediately from combining Theorems \ref{T1} and \ref{T2}.

An interesting open problem is to see to what extent this analysis continues to hold if the assumption of radial symmetry is dropped. In our arguments, this symmetry is used heavily in our analysis of the behavior of $f_\ep$ away from the bulk, and near the boundary of the bulk.

\hfill


We briefly remark on the assumption (\ref{Vgrowth}) of quadratic growth. Our proofs
show that the absence of vortices in the low density region is a
consequence of the fact that the auxiliary function $f_\ep =
\xi_\ep/\etaeps^2$ is very small in $\R^2\setminus \cd$. The proof
of this fact (see see Lemma \ref{lem:f}) can be modified to show
that if for example (\ref{Vgrowth}) holds with $p<2$, then $f_\ep(r)
\ge C \ep r^{1 - p/2} \to \infty$ as $r\to \infty$. However, in this
situation $E_\ep$ is unbounded below for any $\Omega\ne 0$.  This
reflects the fact that a subquadratic trapping potential is not
strong enough to contain a rotated condensate.



\section{Properties of auxiliary functions}

In this section we study the real-valued minimizer $\etaeps$ and the auxiliary functions
 $\xi_\ep$ and $f_\ep = \xi_\ep/\etaeps^2$ defined as
 \begin{equation}
 \xi_\ep(r) = \int_r^\infty s \eta_\ep^2(s) \ ds, 
\quad\quad f_\ep(r) = \xi_\ep(r)/ \eta_\ep^2(r).
\label{fep.def}\end{equation}

\begin{thm} Assume that $V$ satisfies (\ref{V1}), (\ref{V2}). Then for every $\ep>0$, there exists a unique positive minimizer $\etaeps$ of $G_\ep$ in
\[
\mathcal{H} := \{ u\in H^1(\R^2) :  \ \int_{\R^2}  |u|^2V(x) <\infty,\ \  \int_{\R^2} |u|^2 = 1 \}.
\]
Every minimizer of $G_\ep$ in $\mathcal{H}$  has the form
$e^{i\alpha}\etaeps$, for $\alpha$  constant.
Moreover, $\etaeps$ is a radial smooth positive function and  satisfies (\ref{1-1})
with \be |\lambda_\ep -\lambda_0 |\le  C \ep |\log \ep|^{1/2}
\label{etaeps4}\ee where $\lambda_0$ is defined by (\ref{l0.def}).
Finally, recall the notations $R_\delta$ from (\ref{V2}) and
$a=\lambda_0-V$,
 the following estimates are satisfied:
\bea
\eta_\ep(r) &\le & C \, \ep^{1/6} \,  e^{\, c\, \ep^{-1/3}(\sqrt{R}-\sqrt{r})}
\hspace{1.7cm} \mbox{ in }\R^2\setminus \cd
\label{etaeps1}\\
| \etaeps - \sqrt {a^+} | &\le &  C \ep^{1/3} \sqrt {a^+} \hspace{3.5cm} \mbox{ in }B_{R_{-\ep^{1/3}}} \label{etaeps2}\\
\| \nabla \etaeps\|_{L^\infty(\R^2)}
&\le&
C\ep^{-1}.
\label{etaeps3}\\
\etaeps'(r) &\le &0 \quad  \qquad\qquad \qquad \mbox{ for all }r\in (R_{-\d_0}, R_{\d_0})\label{etaeps6}
\\
|\etaeps'(r)| &\le &\frac C \ep  \etaeps(r) \sqrt{ V(r)}\quad\mbox{ for all sufficiently large }r
\label{etaeps7}
\eea
if $\ep < \ep_0$.

\label{theoeta}\end{thm}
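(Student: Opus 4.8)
The plan is to establish existence, uniqueness, and the Euler--Lagrange equation by standard variational methods, then extract the quantitative decay estimates by constructing sub/supersolutions and using the maximum principle. The final decay estimates \eqref{etaeps1}--\eqref{etaeps7} are the heart of the matter, so I focus most of my effort there.

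\medskip
\noindent\textit{Existence, uniqueness, regularity, and the equation.}
First I would establish that $G_\ep$ is coercive and weakly lower semicontinuous on $\mathcal{H}$. Coercivity uses the growth hypothesis \eqref{Vgrowth}: the potential term $\frac{1}{2\ep^2}\int V|u|^2$ controls the large-$r$ behavior, while the $H^1$ and quartic terms control the rest. A minimizing sequence is therefore bounded in $\mathcal{H}$, and after passing to a weak limit (using compact embedding on bounded sets plus the potential to prevent mass escaping to infinity) one obtains a minimizer $\eta_\ep$. Since $G_\ep(|u|)\le G_\ep(u)$ with equality forcing $u$ to have constant phase, any minimizer is $e^{i\alpha}|u|$, and we may take $\eta_\ep=|u|\ge 0$. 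Uniqueness of the positive minimizer follows from strict convexity of $G_\ep$ in the variable $\rho=\eta^2$ (the functional is convex in $\rho$, a standard observation for Gross--Pitaevskii-type energies), which also forces the Lagrange multiplier $\lambda_\ep$ to be unique. The Euler--Lagrange equation \eqref{1-1} is obtained by the usual first variation, and elliptic regularity bootstraps $\eta_\ep$ to a smooth function; radial symmetry follows from uniqueness together with the radial symmetry of $V$, since any rotation of a minimizer is again a minimizer. The strong maximum principle applied to \eqref{1-1} gives $\eta_\ep>0$ everywhere.

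\medskip
\noindent\textit{The multiplier estimate \eqref{etaeps4} and the bulk estimate \eqref{etaeps2}.}
For $|\lambda_\ep-\lambda_0|$ I would compare $G_\ep(\eta_\ep)$ against a carefully chosen test function built from $\sqrt{a^+}$ regularized near $\partial\cd$ (where $a^+$ has a square-root-type singularity in its gradient), using \eqref{l0.def} to match the $L^2$ constraint; pairing the resulting energy upper bound with the lower bound coming from the equation pins $\lambda_\ep$ to $\lambda_0$ at the stated rate $\ep|\log\ep|^{1/2}$. The interior estimate \eqref{etaeps2} then follows: in $B_{R_{-\ep^{1/3}}}$ the function $a^+=a$ is bounded below by a positive constant times $\delta_0$ (this is exactly the non-degeneracy built into \eqref{V2}), so the equation \eqref{1-1} can be linearized around $\sqrt{a^+}$ and the error controlled by $\lambda_\ep-\lambda_0$ together with the size of $\Delta\sqrt{a^+}$; a maximum-principle comparison yields the relative error $C\ep^{1/3}$.

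\medskip
\noindent\textit{The exterior decay \eqref{etaeps1} and the gradient bounds.}
The estimate \eqref{etaeps1} is what I expect to be the main obstacle, since it requires the sharp exponential rate $c\,\ep^{-1/3}(\sqrt R-\sqrt r)$ together with the correct prefactor $\ep^{1/6}$. In $\R^2\setminus\cd$ we have $a<0$, so dropping the (nonnegative, hence helpful) cubic term $\eta_\ep^3$ from \eqref{1-1} leaves a linear inequality $-\Delta\eta_\ep+\frac{1}{\ep^2}(V-\lambda_\ep)\eta_\ep\le 0$ with $V-\lambda_\ep\gtrsim c_1(r^2-R^2)$ by \eqref{Vquad}. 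The plan is to build an explicit radial supersolution of the form $\eta_\ep(r)\le C\,g(r)\,e^{-\frac{1}{\ep}\int_R^r\sqrt{(V-\lambda_\ep)^+}\,ds}$, a WKB/Agmon-type barrier; evaluating the Agmon integral using $V-\lambda_\ep\sim c_1(r^2-R^2)\sim 2c_1 R(r-R)$ near $\partial\cd$ produces an exponent scaling like $\ep^{-1}(r-R)^{3/2}$, which after the change of variable matching $\sqrt R-\sqrt r$ gives precisely the $\ep^{-1/3}$ rate in a boundary layer of width $\ep^{2/3}$; the algebraic prefactor $g$ contributes the $\ep^{1/6}$ factor, and the value of $\eta_\ep$ on $\partial\cd$ is controlled by \eqref{etaeps2}. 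Matching the barrier to the interior solution on a sphere of radius $\approx R_{-\ep^{1/3}}$ and applying the maximum principle on the exterior domain completes \eqref{etaeps1}. The crude global gradient bound \eqref{etaeps3} follows from standard elliptic estimates applied to \eqref{1-1} after rescaling at scale $\ep$ (the coefficient $\ep^{-2}$ forces the $\ep^{-1}$ scaling). The monotonicity \eqref{etaeps6} is obtained by differentiating \eqref{1-1} in $r$ and running a maximum-principle argument on $\eta_\ep'$ in the annular region $(R_{-\delta_0},R_{\delta_0})$, where $V$ is increasing and $a$ is small, so that the sign of the resulting ODE forces $\eta_\ep'\le 0$. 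Finally, the logarithmic-derivative bound \eqref{etaeps7} for large $r$ follows by the same WKB heuristic: the supersolution construction shows $(\log\eta_\ep)'\approx-\frac1\ep\sqrt{V-\lambda_\ep}\approx-\frac1\ep\sqrt V$, and making this rigorous through a differential-inequality comparison on $\eta_\ep'/\eta_\ep$ yields $|\eta_\ep'|\le\frac C\ep\eta_\ep\sqrt V$.
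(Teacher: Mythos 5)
Your overall architecture (variational existence, Euler--Lagrange equation, then barrier constructions for the decay estimates) is the right one, and most of your steps align with or acceptably replace the paper's. But your treatment of the monotonicity (\ref{etaeps6}) has a genuine gap. Differentiating (\ref{1-1}) in $r$ gives, for $w=\etaeps'$, an inequality of the form $\left(-\partial_{rr}-\frac1r\partial_r+\frac1{r^2}+\frac1{\ep^2}(V+3\etaeps^2-\laeps)\right)w=-\frac1{\ep^2}V'\etaeps\le 0$ on the band where $V'\ge 0$, but you cannot ``run a maximum-principle argument on $\etaeps'$ in the annulus $(R_{-\d_0},R_{\d_0})$'' from this alone, for two reasons. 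First, you have no boundary data: nothing established so far gives the pointwise signs $\etaeps'(R_{\pm\d_0})\le 0$; the sup-norm estimates (\ref{etaeps1})--(\ref{etaeps2}) control values of $\etaeps$, not derivative signs at prescribed radii. Second, the maximum principle is not free for this operator: the zeroth-order coefficient $\frac1{\ep^2}(V+3\etaeps^2-\laeps)$ is negative wherever $3\etaeps^2<\laeps-V$, and in the boundary layer $|r-R|\lesssim \ep^{2/3}$ you have no lower bound on $\etaeps^2$ relative to $a$, so the sign condition cannot be verified there. What is really needed is positivity of the principal eigenvalue of the linearization $L=-\Delta+\frac1{\ep^2}(V+3\etaeps^2-\laeps)$. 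The paper supplies exactly the two missing ingredients: assuming (\ref{etaeps6}) fails, it takes an interval $(\alpha,\beta)$ with $\etaeps'>0$ inside and $\etaeps'(\alpha)=\etaeps'(\beta)=0$, which \emph{manufactures} zero boundary data for $\partial_{x_1}\etaeps$ on $D=\{x_1>0,\ \alpha<|x|<\beta\}$; it localizes $(\alpha,\beta)\subset(R_{-2\d_0},R_{2\d_0})$ using (\ref{etaeps2}) and (\ref{etaeps1}) so that $\partial V/\partial x_1\ge 0$ on $D$; and it gets the spectral positivity from $L\etaeps=\frac2{\ep^2}\etaeps^3>0$ together with the Berestycki--Nirenberg--Varadhan maximum principle, reaching a contradiction via odd reflection and domain monotonicity of Dirichlet eigenvalues. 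Without some version of this spectral input your step for (\ref{etaeps6}) fails.

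Elsewhere your deviations are legitimate and worth comparing. For uniqueness you invoke hidden convexity of $G_\ep$ in $\rho=\etaeps^2$; that works and is arguably more elementary, whereas the paper uses the Lassoued--Mironescu/Brezis--Oswald quotient trick ($w=\eta_1/\eta_0$ plus an integral identity), which proves the stronger fact that positive normalized \emph{solutions} of (\ref{1-1}) are unique, not merely minimizers. For (\ref{etaeps1}) you propose a WKB/Agmon barrier $\exp\left(-\frac1\ep\int_R^r\sqrt{(V-\laeps)^+}\right)$; this would yield a stronger decay than asserted, but you must then handle the turning point at $r=R$, where the WKB correction terms blow up, and your proposed matching radius $R_{-\ep^{1/3}}$ lies \emph{inside} $\cd$, where the Agmon weight vanishes, so a transition region is unavoidable. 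The paper sidesteps all of this with a deliberately crude piecewise $C^1$ supersolution ($\sqrt{\la_0-V+8\d}$ inside, a linear transition on $R_{-\d}\le|x|\le R_\d$, and $\g e^{-\sqrt{r}/\sigma}$ outside, with $\d\sim\ep^{1/3}$), for which the supersolution inequality is trivial to check precisely because the claimed rate $\ep^{-1/3}$ in $\sqrt r$ is far from the sharp Agmon rate. Note also that the prefactor $\ep^{1/6}$ arises as the size $\sqrt{a(R_{-\ep^{1/3}})}\sim\ep^{1/6}$ of $\etaeps$ at the matching radius, not from the algebraic WKB prefactor as you suggest. Your sketches of (\ref{etaeps4}) (regularized $\sqrt{a^+}$ competitor giving $G^1_\ep\le C|\log\ep|$, then the multiplier identity), of (\ref{etaeps2}), of the rescaled elliptic estimate for (\ref{etaeps3}), and of the pinned sub/supersolution comparison for (\ref{etaeps7}) all match the paper's arguments.
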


Certain parts of the proof follow quite closely arguments given in
\cite{AAB} and in the pure quadratic case in \cite{IM1}.
 Note that some arguments in \cite{IM1} rely strongly on the special
 shape of the potential and cannot be generalized to other
 functions.
 Since $V$ is not necessarily increasing, we have property
 (\ref{etaeps6}) only in the neighborhood of $\partial \cd$.

\begin{proof}
{\em Step 1. Existence of minimizers}:
This follows from standard arguments once we notice that $\int_{\R^2}|u_n|^2 V dx \le C$
is uniformly bounded for any sequence $(u_n)$ minimizing $G_\ep$, and the set
of functions in $\mathcal{H}$ satisfying such a uniform bound is precompact with respect to
weak convergence in $H^1(\R^2)$. This last fact is proved by straightforward and well-known arguments, such as are explained in the proof  in   \cite{IM1}, Lemma 2.1, for $V$ quadratic, the point being that the bound on $\int |u|^2 V$ prevents mass escaping to $\infty$. Standard theory then implies that any minimizer is smooth.
If $\eta$ is any minimizer, then $|\eta|$ is as well, since $G(|\zeta|)\le G(\zeta)$ for all $\zeta$. The strong maximum principle then  implies that $|\eta|$ (and hence $\eta$) never vanishes, and since $G(\eta)\le G(|\eta|)$, it is easy to see that $\eta/|\eta| = e^{i\alpha}$ for some constant $\alpha$. We henceforth let $\etaeps$ denote a  fixed positive minimizer.\\

{\em Step 2: uniqueness of $\etaeps$}. Multiplying (\ref{1-1}) by
$\etaeps$ and integrating by parts  we find that $\mu_\ep $ is
positive. Suppose that there are two couples $(\eta_0,\mu_0)$ and
$(\eta_1,\mu_1)$ satisfying (\ref{1-1}) such that $\|
\eta_0\|_{L^2}=1=\|\eta_1\|_{L^2}$ and $\mu_0>\mu_1$, and define
$w=\frac{\eta_1}{\eta_0}$. This function verify

\[
\int_{\R^2} \eta_0^2 (w-1)^2 \, dx = 2 \int_{\R^2} (\eta_1^2 - \eta_0\eta_1 ) \,dx =  2 \int_{\R^2}  \eta_0^2 w (w-1) \, dx
 \]
and
\[
-\nabla \cdot (\eta_0^2 \nabla w ) + \frac 1{\ep^2}  \eta_0^4 w (w^2-1) = (\mu_1-\mu_0) \eta_0^2 w .
 \]

Multiplying the second equality by $(w-1)$, integrating by parts and then using the first equality we find
\[
\int_{\R^2} \left\{  \eta_0^2 |\nabla (w-1)|^2 +  \frac 1{\ep^2}  \eta_0^4 w (w-1)^2(w+1) + \frac12 (\mu_0-\mu_1) \eta_0^2 (w-1)^2 \right\}\, dx = 0 .
 \]
The integration by parts is justified in view of (\ref{etaeps1}), (\ref{etaeps7}), which apply to both $\eta_0$ and $\eta_1$, and the proofs of which do not rely on the uniqueness of the minimizer. Hence $w\equiv1$ and $\mu_0=\mu_1$.\\

{\em Step 3: estimate of $\lambda_\ep - \lambda_0$}.
We next note, following standard arguments, that
$G_\ep$ can be rewritten
\[
G_\ep(\eta) =
\int_{\R^2} \left\{  \frac12 |\nabla \eta|^2
+{1\over 4\eps^2}(\eta^2 - a^+)^2 + \frac 1{2\ep^2} a^- \eta^2
        \right\}
  dx +  \frac 1{2\ep^2} \left(\lambda_0 -  \frac 12\int (a^+)^2\right)
\]
if $\|\eta\|_2=1$.  Let $G_\ep^1(\eta)$ denote the first integral above.  We claim that
\[
G_\ep^1(\etaeps) \le C|\log\ep|.
\]
Since $\etaeps$ is a minimizer, to prove this it suffices to construct a competitor for which $G^1_\ep$ is suitably small. To do this, define
\[
g_\ep(s) := \left\{\begin{array}{ll}
\frac s \ep&\mbox{ if }s\le \ep^2\\
\sqrt{s}&\mbox{ if }s\ge \ep^2,
\end{array}\right.
\quad\quad\mbox{ and } \tilde \etaeps := \frac{ g_\ep(a^+)}{\| g_\ep(a^+)\|_{L^2}}.
\]
Note that
\[
1 = \int a^+ \ge \ \int g_\ep^2(a^+) \ = \ \int a^+ - \int_{a^+\le\ep^2} a^+ \{ 1-\frac{a^+}{\ep^2} \}  \ge 1 - C\ep^2.
\]
Using this and explicit calculations such as those in \cite{J},
Lemma 12, the claim is easily verified. We now multiply (\ref{1-1})
by $\etaeps$, integrate by parts and rewrite, recalling the $L^2$
constraint, to find that
\begin{eqnarray}
\frac 1 {\ep^2} (\lambda_\ep - \lambda_0)
&=&
\int |\nabla \etaeps|^2 + \frac 1{\ep^2}(\etaeps^2 -(V-\lambda_0))\etaeps^2 \ dx
\label{mu_ep}\\
&=&
\int |\nabla \etaeps|^2 + \frac 1{\ep^2}(\etaeps^2 - a^+ +a^-)\etaeps^2 \ dx\nnn\\
&=&
\int |\nabla \etaeps|^2 + \frac 1{\ep^2}\left[ a^-\etaeps^2  +(\etaeps^2 - a^+)^2  +  (\etaeps^2 - a^+)a^+\right]dx\\
&\le&
4G_\ep^1(\eta_\ep) +  \frac 1{\ep^2}\| \etaeps^2-a^+\|_{L^2}\|a^+\|_{L^2} \le
C[G_\ep^1(\eta_\ep)  + \frac 1 \ep\sqrt{G_\ep^1(\eta_\ep)}].
\nnn\end{eqnarray}
Thus we have proved (\ref{etaeps4}).\\

{\em Step 4 :  estimates of $\etaeps$}.\\

We claim that

\be
\etaeps^2 \leq \max_{\cd} (\laeps-V) =: A
\label{etainf}
\ee
%

To see this define $w=\frac1{\ep} (\etaeps-\sqrt{A})$. We have that $\etaeps \in L^3_{loc}$, so after (\ref{1-1}) $w,\Delta w \in L^1_{loc}$. Kato's inequality gives $\Delta w^+ \geq sgn^+(w) \Delta w $. Using (\ref{1-1}) again we find

\begin{eqnarray*}
 \Delta w^+ \geq \frac{sgn^+(w)}{\ep^3}\etaeps (\etaeps^2-A)  =   \frac{sgn^+(w)}{\ep^3} (\ep w+\sqrt{A}) (\ep^2 w^2 +2 \ep w \sqrt{A}) \geq (w^+)^3 \quad \mbox{ in } \cd'
\end{eqnarray*}

Hence we have $ -\Delta w^+ +(w^+)^3 \le  0$ in $\cd'(\R^2)$ and $w \in L^3_{loc}$, so using Lemma 2 in \cite{Br}, $w^+\equiv0$. \\

We remark that the properties of the potential $V$ at the boundary (\ref{V2})  implies that the maximum of $\lambda_\ep - V$ is attained at an interior point $x_0$ of $\cd$  such that $\dist(x_0,\partial \cd) > c\, \d_0$.\\

The minimizer being a solution of (\ref{1-1}) in $L^{\infty}$, by elliptic regularity  we derive   that it is a smooth function. \\

{\em Proof of (\ref{etaeps1}).} We construct a supersolution of (\ref{1-1})  of the form \\

\[
\etabar (x ):= \left\{\begin{array}{ll}
\sqrt{\la_0-V(x)+8\d} &\mbox{ if } |x| \leq R_{-\d } \\
\\
\frac{\la_0-\d-V(x)}{6\sqrt{\d}} + 3\sqrt{\d} &\mbox{ if }  R_{-\d }\leq |x| \leq R_{\d } \\
\\
\g \, e^{-\frac{\sqrt{ |x| }}{\sigma}} &\mbox{ if }  R_{\d }  \leq  |x|
\end{array}\right.
\]
where $0<\d<\d_0$ is small parameter that will be determined later and $\g$, $\sigma$ are chosen such that $\etabar \in C^1(\R^2)$, i.e.,

\[
\g = \frac{8 \sqrt{\d}}3  \, \exp\left( \sigma^{-1} R_{\d}^{1/2} \right) \quad  \mbox{ and }\quad\sigma = 16 \, \d \,  \frac{R_{\d}^{-1/2}}{|\nabla V (R_{\d})| }
\]
A straightforward computation shows that  for $\d=C \eps^{1/3}$, $\etabar$ is a supersolution of (\ref{1-1}) and we also have

\[
\sigma = \mathcal{O}( \eps^{1/3}) \quad \mbox{ and }\quad \g = \mathcal{O} \left( \ep^{1/6}  e^{ \ep^{-1/3}   \sqrt{R }} \right)  .
\]

Moreover, with this choice of $\d$, $\etabar^2> \lambda_\ep - V$   for every $|x| \leq R_{-\d }$, so using (\ref{etainf})
\[
\etaeps^2(x_0) \leq A = \lambda_\ep - V(x_0) < \etabar(x_0) .
\]

Because $\etaeps$ and $\etabar$ are going to zero at infinity, if the function $\etaeps-\etabar$ is positive somewhere  in $(r_0,\infty)$, for $r_0 := |x_0|$, then it attains a positive maximum at $ \tilde{r} \in (r_0,\infty)$, i.e. $\etaeps'(\tilde{r})=\etabar'(\tilde{r})$ and $\etaeps''(\tilde{r})<\etabar''(\tilde{r})$. Given the structure of (\ref{1-1}) and because $\etabar$ is a supersolution and $\etaeps$ a solution, if $V(\tilde{r})-\lambda_{\ep} \ge 0$ we would have that $\etaeps(\tilde{r}) \le \etabar(\tilde{r})$. In another hand, if $V(\tilde{r})-\lambda_{\ep} < 0$ then we would have $\etabar(\tilde{r})<\sqrt{\lambda_{\ep}-V(\tilde{r})}$, which for $\ep$ small enough, contradicts the definition of $\etabar$. Hence

\[
\etaeps(r) \leq \etabar(r) \quad\quad \mbox{ in } \hspace{1cm}  (r_0, \infty) \,.
\]

{\em Proof of (\ref{etaeps2}).} Using assumption (\ref{V2}), by exactly following \cite{AAB}, one finds that $|\etaeps - \sqrt{a_\ep^+}| \le C \ep^{1/3} \sqrt{a_\ep^+}$,
for $a_\ep := \lambda_\ep - V = a + \lambda_\ep - \lambda_0$. In view of (\ref{etaeps4}), this implies (\ref{etaeps2}).
 \\

{\em Proof of (\ref{etaeps3}).} For $x \in \R^2$ define $\etatil(y)=\etaeps(\ep(y-x))$ in $B_{2L}(x)$. This function satisfies
\[
\Delta \etatil = \etatil\, (V(\ep(y-x))+\etatil^2- \lambda_\ep) =: h_\ep
\]
After estimates  (\ref{etaeps1}) and (\ref{etaeps2}) $|h_\ep| \leq  C$, so using a H\"older estimate for the first derivative of $\etatil$ (see Theorem 8.32 in \cite{GT})  we have that $ \| \nabla \etatil \|_{L^{\infty} (B_{L}(x))} \leq  C $ for a constant $C$ independent of $x$ and hence the result. \\

{\em Step 4 : Proof of (\ref{etaeps6}).}\\

 We denote $L$  the elliptic operator obtained by linearizing equation (\ref{1-1})

 \[
    L := -\Delta + \frac 1{\ep^2} (V(x)+3\etaeps^2-\laeps),
 \]
and $\lambda_j$, $j=1,2,...$ , its eigenvalues in  $\R^2$. \\

Let $\mu$ be the first Dirichlet eigenvalue of $L$ in the half space $\Omega=\{x_1>0\}$ and $\psi$ the corresponding eigenfunction (which exists because of the compact embedding of $\mathcal{H}$ in $L^2$). Since $V$ and $\etaeps$ are radial, is clear that the odd extension of $\psi$ to $\R^2$ is a eigenfunction for $L$ in $\R^2$ with corresponding eigenvalue $\mu=\lambda_j$. Note that $j\ge 2$ because the odd extension change sign in $\R^2$.

We have that $L \etaeps=2\etaeps^4>0$ and $\etaeps>0$. Using the maximun principle due to Berestycki, Nirenberg and Varadhan  \cite{BNV}, this implies that the first eigenvalue of $L$ is positive. We will prove that if (\ref{etaeps6}) does not hold, then $\mu<0$, which contradicts the fact that $\lambda_1>0$.

Assume that $\etaeps'(r)>0$ at some $r\in (R_{-\delta_0}, R_{\delta_0})$. Then there exists
$\alpha < r < \beta$ such that $\etaeps'(\alpha) = \etaeps'(\beta) = 0$ and $\etaeps'>0$
in $(\alpha, \beta)$.
If $\alpha \le R_{-2\delta_0}$, then $\etaeps$ is increasing
on $(  R_{- 2\delta_0},   R_{-  \delta_0})$, so that $\etaeps( R_{- 2\delta_0})
\le  \etaeps(R_{- \delta_0})$. This is impossible for all sufficiently small $\ep$, since $\etaeps\to \sqrt{a^+}$
uniformly for $r < R_{-\ep^{1/3}}$, by (\ref{etaeps2}),  and $a^+(R_{-  2\delta_0}) > a^+(R_{- \delta_0})$.
Thus $\alpha \ge R_{-2\delta_0}$. The same argument, but using (\ref{etaeps1}) instead of (\ref{etaeps2}),
shows that $\beta \le R_{2\delta_0}$.

Now let $D := \{ x \in \R^2 : x_1>0, \alpha<|x|< \beta\}$.
Then
\[
\frac{\partial\etaeps}{\partial{x_1}}>0  \mbox{ in } D\, , \quad  \quad  \frac{\partial\etaeps}{\partial{x_1}}=0 \mbox{ in } \partial D \quad \mbox{ and } \quad L \left( \frac{\partial\etaeps}{\partial{x_1}}\right) = - \frac{\partial V}{\partial{x_1}} \etaeps \leq 0 \,\mbox{ in } D \,.
\]
The last inequality come from the differentiation of (\ref{1-1}) and hypothesis (\ref{V2}), which implies that
$\partial V/\partial R>0$ for $r\in (R_{-2\delta_0}, R_{2\delta_0})$. Using the monotonicity of Dirichlet eigenvalues with respect to the domain, this implies that $\mu<0$.

{\em Step 5 : Proof of (\ref{etaeps7}).}
For any $r \ge R$, define a function  $\etatil:(r,\infty)\to \R$ by
\[
\etatil(s) := \etaeps(r) \exp\left( \alpha  \frac{2}{p+2}(s^{\frac {p+2}2} - r^{\frac{p+2}2})\right)
\]
where $c_0$ and $p$ are the constants in (\ref{Vgrowth}).
It follows from  (\ref{etaeps4}) and (\ref{Vgrowth}) that if
$s \ge r$ and $r$ is sufficiently large, then $V(s) - \lambda_\ep + \etatil^2(s) \le V(s) \le c_0 s^p$,
so that if $r$ is sufficiently large, then
\[
-\Delta \etatil + \frac 1 {\ep^2}(V - \lambda_\ep + \etatil^2)\etatil \le
-\Delta \etatil(s) + \frac {c_0} {\ep^2}  s^p\etatil =
\left(( - \alpha^2+\frac{ c_0}{\ep^2}) s^p +\alpha( \frac p2 + 1)s^{\frac p2-1}
\right)\etatil.
\]
Choosing $\alpha = \frac{ (2c_0)^{1/2}}\ep$, it follows that $\etatil$ is a subsolution of (\ref{1-1})
in $(r,\infty)$
if $r$ is sufficiently large. For such $r$, noting that $\etatil(r) = \etaeps(r)$, we can
argue as in the proof of (\ref{etaeps1}) to deduce that $\etaeps- \etatil$ is nonnegative
in $(r,\infty)$.

Then since $\etatil(r) = \etaeps(r)$ and $\etatil(s)\le \etaeps(s)$ for $s \ge r$, we again use
(\ref{Vgrowth}) to conclude that
\[
\etaeps'(r)\  \ge \  \etatil'(r)  \ = \  -  \frac{ (2c_0)^{1/2}}\ep r^{\frac p2}\etaeps(r)  \ \ge  \ - \sqrt 2 \frac{c_0}\ep \sqrt {V(r)} \etaeps(r)
\]
for sufficiently large $r$.
On the other hand, by choosing $\alpha = \frac {\sqrt{c_0}}{2\ep}$ in the definition of $\etatil$, we obtain a decreasing supersolution (still denoted $\etatil$) such that $\etatil(r) = \etaeps(r)$. A similar application of the maximum
principle shows that $\etaeps$ is bounded above by (the new) $\etatil$ on $(r,\infty)$,
and in particular this implies that $\etaeps'(r)\le 0$. These facts combine to establish (\ref{etaeps7}).
\end{proof}

We next prove

\begin{lem} Assume that $V$ satisfies (\ref{V1}) and (\ref{V2}) and
the quadratic growth condition (\ref{Vquad}).
Let  $\etaeps$ be the positive minimizer found in Theorem \ref{theoeta}. Let $f_\ep (x) :=\xi_\ep (x)/ \etaeps^2 (x)$, where $\xi_\ep$ was defined in (\ref{eqxi}). Then there exists a constant $C$ independent of $\ep\in (0,\ep_1]$ such that

\begin{equation}
f_\ep(|x|) \le \left\{
\begin{array}{ll}
C \dist(x, \partial \cd) \ + C \ep^{2/3} \quad
&\mbox{ if }x\in \cd\\
C \ep^{2/3}&\mbox{ if }not.
\end{array}
\right.
\label{fep1}\end{equation}
In addition, for all sufficiently small $\ep$,
\be
\|\nabla \xi_\ep\|_\infty   \le C
\label{etaeps5}\ee
and
\be
\| f_\ep - f_0\|_\infty \le C \ep^{1/3}.
\label{f1}\ee
\label{lem:f}\end{lem}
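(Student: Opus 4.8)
The plan is to work with the radial function $f_\ep=\xi_\ep/\etaeps^2$ and to exploit the two length scales near $\partial\cd=\{|x|=R\}$: the scale $\ep^{1/3}$ on which (\ref{etaeps2}) stops being useful, and the scale $\ep^{2/3}$ on which $\etaeps$ begins to decay outside the bulk. The bound (\ref{etaeps5}) is immediate: $\xi_\ep$ is radial with $\xi_\ep'(r)=-r\etaeps^2(r)$, so $|\nabla\xi_\ep|=r\etaeps^2(r)$, which is $\le R\cdot A$ on $\cd$ by (\ref{etainf}) and, for $r\ge R$, is bounded by $C\ep^{1/3}r\,e^{-2c\ep^{-1/3}(\sqrt r-\sqrt R)}$ via (\ref{etaeps1}); the latter is decreasing for $r\ge R$, hence $\le C$. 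For the rest I would split $\{r\ge R_{-\ep^{1/3}}\}$ from the deep interior $\{r\le R_{-\ep^{1/3}}\}$, proving (\ref{fep1}) first and deducing (\ref{f1}) from it.

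The analytic heart is the exterior bound $f_\ep(r)\le C\ep^{2/3}$, and the crucial point is that the quadratic growth (\ref{Vquad}) forces decay to set in exactly on the scale $\ep^{2/3}$. Set $r_\ast:=R+\ep^{2/3}$. For $s\ge r_\ast$ one has $V(s)-\laeps\ge c_1(s^2-R^2)-C\ep|\log\ep|^{1/2}\ge c_1 R\ep^{2/3}$ by (\ref{Vquad}) and (\ref{etaeps4}); writing (\ref{1-1}) radially as $\etaeps''+\frac1s\etaeps'=\ep^{-2}(V+\etaeps^2-\laeps)\etaeps$ and using $\etaeps'\le0$ on $(R,\infty)$ (from (\ref{etaeps6}) near $R$, and for larger $s$ where $V>\laeps$ from the fact that every critical point of $\etaeps$ is then a strict minimum, incompatible with $\etaeps\to0$), I obtain the convexity inequality $\etaeps''\ge\mu^2\etaeps$ on $[r_\ast,\infty)$ with $\mu=\sqrt{c_1R}\,\ep^{-2/3}$. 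A maximum-principle comparison with $\etaeps(r)e^{-\mu(s-r)}$ gives $\etaeps^2(s)/\etaeps^2(r)\le e^{-2\mu(s-r)}$ for $s\ge r\ge r_\ast$, whence $f_\ep(r)=\int_r^\infty s\,\etaeps^2(s)/\etaeps^2(r)\,ds\le r/\mu\le C\ep^{2/3}$ (for larger $r$ the local rate $\sqrt{b(r)}/\ep$ only improves this). I can then transfer this inward using \emph{only} monotonicity, with no lower bound on $\etaeps$: for $R_{-\ep^{1/3}}\le r\le r_\ast$, splitting $f_\ep(r)=\frac1{\etaeps^2(r)}\bigl(\int_r^{r_\ast}s\etaeps^2\,ds+\xi_\ep(r_\ast)\bigr)$ and using $\etaeps^2(s)\le\etaeps^2(r)$ on $[r,r_\ast]$ for the first term and $\xi_\ep(r_\ast)=f_\ep(r_\ast)\etaeps^2(r_\ast)\le C\ep^{2/3}\etaeps^2(r)$ for the second (since $\etaeps^2(r)\ge\etaeps^2(r_\ast)$ by (\ref{etaeps6})), I get $f_\ep(r)\le\int_r^{r_\ast}s\,ds+C\ep^{2/3}\le C(R-r)^++C\ep^{2/3}$. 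This establishes (\ref{fep1}) on $\{r\ge R_{-\ep^{1/3}}\}$.

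For the deep interior I would compare $f_\ep$ with $f_0$. First, $\xi_0(r)=f_0(r)a(r)\le\|f_0\|_\infty a(r)$, and using the monotonicity of $a$ near $\partial\cd$ together with $a\ge\delta_0$ on $\{r\le R_{-\delta_0}\}$ one checks $f_0(r)\le C(R-r)^+$. The comparison I need is the refined bound $|\xi_\ep(r)-\xi_0(r)|\le C\ep^{1/3}a(r)$ for $r\le R_{-\ep^{1/3}}$: on $[r,R_{-\ep^{1/3}}]$ the integrand is controlled by (\ref{etaeps2}), giving $C\ep^{1/3}\xi_0(r)\le C\ep^{1/3}a(r)$, while the tail $\int_{R_{-\ep^{1/3}}}^\infty s(\etaeps^2-a^+)\,ds$ is $O(\ep^{2/3})\le C\ep^{1/3}a(r)$ because $a(r)\ge\ep^{1/3}$ there (the bound $\int_R^\infty s\etaeps^2\le C\ep^{2/3}$ again from (\ref{etaeps1})). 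Since the numerator of $f_\ep-f_0=\frac{(\xi_\ep-\xi_0)a+\xi_0(a-\etaeps^2)}{\etaeps^2 a}$ is then $O(\ep^{1/3}a^2)$ and the denominator is $\ge\frac12a^2$ by (\ref{etaeps2}), the factors $a^{-2}$ cancel and $|f_\ep-f_0|\le C\ep^{1/3}$ on $\{r\le R_{-\ep^{1/3}}\}$. This yields both the deep-interior case of (\ref{fep1}) (absorb $C\ep^{1/3}$ into $C(R-r)$ since $R-r\ge c\ep^{1/3}$ there) and the interior case of (\ref{f1}); on $\{r\ge R_{-\ep^{1/3}}\}$, (\ref{f1}) then follows by combining the already-proved (\ref{fep1}) with $f_0\le C(R-r)^+\le C\ep^{1/3}$.

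The main obstacle is the exterior estimate: one must capture that decay begins precisely at the $\ep^{2/3}$ scale, which is exactly where the quadratic growth (\ref{Vquad}) enters, and one must resist the temptation to bound $\etaeps(R)$ from below, since the only available supersolution (\ref{etaeps1}) carries the non-sharp rate $\ep^{-1/3}$ and would lose a power of $\ep$. The device that makes everything work is the convexity-plus-comparison bound at $r_\ast$ followed by a purely monotone inward transfer, which sidesteps any delicate (Airy-type) two-sided control of $\etaeps$ inside the layer. The only remaining difficulty is then the bookkeeping that matches the three regimes $\{r\le R_{-\ep^{1/3}}\}$, $\{R_{-\ep^{1/3}}\le r\le r_\ast\}$, and $\{r\ge r_\ast\}$, and in particular securing the extra factor $a(r)$ in the interior comparison that is responsible for the cancellation of $a^{-2}$.
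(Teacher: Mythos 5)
Your proposal is correct and follows essentially the same route as the paper's proof: the exterior bound comes from activating (\ref{Vquad}) at the scale $\ep^{2/3}$ (the paper via the Gaussian supersolution $\etaeps(r)e^{-\mu_\delta(s^2-r^2)/2}$ with $\delta=\ep^{2/3}$, you via the equivalent convexity/exponential comparison $\etaeps''\ge\mu^2\etaeps$ after dropping the drift term, licit since $\etaeps'\le 0$ there), the boundary layer is crossed using the monotonicity (\ref{etaeps6}) (your integral manipulation is exactly the paper's differential inequality $f_\ep'(r)\ge -r$ integrated), and the interior uses (\ref{etaeps2}) and (\ref{V2}) with the same splitting at $R_{-\ep^{1/3}}$. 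The only reorganization is that you establish $|f_\ep-f_0|\le C\ep^{1/3}$ on $B_{R_{-\ep^{1/3}}}$ first and read the deep-interior case of (\ref{fep1}) off from it by absorbing $C\ep^{1/3}$ into $C(R-r)$, whereas the paper proves (\ref{fep1}) there directly via the ratio bound (\ref{ratio}) and then derives (\ref{f1}) from the $I+II-III$ decomposition; the two orderings are interchangeable.
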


\begin{proof}
For every $s\ge r\ge R_{\delta}$ (where $0<\d \le \d_0$ will be chosen later),
we define
\be\label{ubeta}
\etatil(s)=\etaeps(r)e^{-\mu_{\delta}(s^2-r^2)/2}
\quad\quad \hbox{ and }\quad\quad  \mu_{\delta}^2 = \frac{c_1(R_{\delta}^2-R^2) +(\lambda_\ep-\lambda_0)}{R_{\delta}^2 \ep^2}.
\ee
Using (\ref{Vquad}), where the constant $c_1$ is defined,  and arguing as in the proof of (\ref{etaeps7}),
we find that $\etatil-\etaeps$ is nonnegative in $(r,\infty)$.

We use the previous estimate and the definition of $\xi_\ep$ to compute
\[
f_\ep(r)=\frac
1{\etaeps^2(r)}\int_r^\infty s\etaeps^2 (s)\ ds\leq\int_r^\infty
e^{-\mu_{\delta}(s^2-r^2)} s\ ds=\frac 1 {2\mu_{\delta}} \quad\ \ \mbox{ for }r \ge R_\delta.
\]
The definition of $f_\ep$ implies that  $f_\ep'(r)=-r-2f_\ep(r)\frac {\etaeps '(r)}{\etaeps
(r)}$, and from the monotonicity (\ref{etaeps6}) of $\etaeps$, we infer that $f_\ep'(r)\geq-r$ in $(R_{-\d_0},R_{\d_0})$. Thus for any $R_{-\d_0}\le r \le R_{\d}$,
\[
f_\ep(r)\leq \frac{R_{\d}^2-r^2} 2 +\frac 1 {2\mu_{\delta}}.
\]
We now fix $\delta = \ep^{2/3}$, and we conclude from (\ref{V2}) and (\ref{etaeps4}) that (\ref{fep1}) holds
as long as $r \ge R_{-\delta_0}$.

For $0 \le r \le R_{-\d_0}$,  we write
\[
f_\ep(r) = \frac 1{\etaeps(r)^2} \int_r^{R_{-\d_0}} s \frac { \etaeps^2(s)}{\etaeps^2(r)} \, ds  +  \frac{\etaeps^2(R_{-\d_0})}{\etaeps^2(r)} f(R_{-\d_0})
\]
From  (\ref{etaeps2}) and (\ref{V2}), we see that if $0 \le r \le s \le   R_{-\ep^{1/3}}$, then
\be
\frac {\etaeps^2(s)}{\etaeps^2(r)} \le  \frac{(1+C\ep^{1/3})^2}{(1-C\ep^{1/3})^2} \frac {a^+(s)} {a^+(r)}
\le C \quad\mbox{ for sufficiently small }\ep,
\label{ratio}\ee
and by using the and the fact that $f_\ep(R_{-\delta_0}) \le C \ep^{2/3}+ C \delta_0$,
one easily deduces that (\ref{fep1}) holds for $r\in [0, R_{-\delta_0})$.\\

Next, the definition of $\xi_\ep$ implies that $|\nabla \xi_\ep(x)| = |x| \etaeps^2(x)$, so that (\ref{etaeps5}) follows from (\ref{etainf}) and (\ref{etaeps1}). \\

For  $r \ge  R_{-\ep^{1/3}}$, we see from (\ref{fep1}) that $|f_\ep(r) - f_0(r)| \le C \ep^{1/3} + |f_0(r)|$.
This is trivially bounded by $C\ep^{1/3}$ if $r\ge R$. If $R \le r \le R_{\ep^{-1/3}}$ then (\ref{V2}) implies that $c(R-r) \le a(r) \le C(R-r)$, and thus
\[
|f_0(r)| = f_0(r) \le \frac C{r-R}\int_r^Rs (R-s) ds \ \le C (R-r) \le C \ep^{1/3}.
\]

For $0\le r \le R_{-\ep^{1/3}} $ we write
\beann
    f_{\ep}(r) - f_0(r) &=& \left\{  \frac 1{\etaeps^2(r)} \int_r^{R_{-\ep^{1/3}}} s \etaeps^2(s) \, ds  -\frac 1{a(r)} \int_r^{R_{-\ep^{1/3}}} s a(s) \, ds  \right\}\\
    && + \frac {\etaeps^2(R_{-\ep^{1/3}})}{\etaeps^2(r)} f_\ep(R_{-\ep^{1/3}})  - \frac {a(R_{-\ep^{1/3}})}{a(r)} f_0(R_{-\ep^{1/3}}) \\
    &=& I+II  -III
\eeann
Using (\ref{ratio}) and our earlier estimates of $f_\ep, f_0$ for $r \ge R_{-\ep^{1/3}}$,
we see that
\[
    |II|\le C f_{\ep}(R_{-\ep^{1/3}})\le C \ep^{1/3} \hspace{.5cm} \mbox{and} \hspace{.5cm}| III| \le C  f_0(R_{-\ep^{1/3}}) \le C \ep^{1/3} \,.
\]
We further decompose the remaining term as
\[
I =  \left(\frac 1{\etaeps^2(r)}- \frac 1{a(r)}\right) \int_r^{R_{-\ep^{1/3}}} s \etaeps^2(s) \, ds  +\frac 1{a(r)} \int_r^{R_{-\ep^{1/3}}} s(\etaeps^2(s) -  a(s)) \, ds  .
\]
Using  (\ref{etaeps2}), it follows that
\[
|I| \le  C\ep^{1/3} \int_r^{R_{-\ep^{1/3}}} s\frac{ \etaeps^2(s)}{\etaeps^2(r)} \, ds  +
C \ep^{1/3} \int_r^{R_{-\ep^{1/3}}} s \frac{a(s)}{a(r)} \, ds  .
\]
Due to (\ref{etaeps6}), $\frac{ \etaeps^2(s)}{\etaeps^2(r)}\le 1$ if
$R_{-\delta_0}\le r \le s \le R_{-\ep^{1/3}}$. And if
$0\le r \le R_{-\delta_0}$ then $\etaeps^2(r)\ge C^{-1}$ and so
$\frac{ \etaeps^2(s)}{\etaeps^2(r)}\le C$. Thus the first integral is bounded by $C \ep^{1/3}$. The second integral is similarly estimated, using (\ref{V2}) in place
of (\ref{etaeps6}).
\end{proof}

\begin{rem}In the case of a potential $V$ for which (\ref{simplyconnected}) fails,
so that for example $\cd$ has the form $B_{R}\setminus B_{R'}$, one
expects that instead of being small, $f_\ep$ is large, namely,
$f_\ep \ge c e^{c/\ep}$ in the interior of $B_{R'}$. This is related
to the formation at very low rotations of a giant vortex in the
interior of $B_{R'}$. The arguments used to prove Lemma \ref{lem:f}
show in this situation that if $V$ grows quadratically in the
complement of $B_R$, as in (\ref{Vquad}), then $f_\ep$ is very small
in $\R^2\setminus B_R$. This suggests that at low rotations there
should be no vortices in $\R^2\setminus B_R$, but this cannot be
deduced from the arguments we use to prove Theorems \ref{T1} and
\ref{T2}.
\end{rem}
The last lemma in this section examines the case when $V$ has
subquadratic growth and $f_\ep$ is also large so that in principle
vortices could exist in the low density region.

\begin{lem} Assume that $V$ satisfies (\ref{V1}), (\ref{V2}) and \be
\mbox{ there exists $c_2>0$ and $p<2$ such that $V(r) \le c_2(r^p
+1)$ for all $r\ge R$.} \label{Vsubquad} \ee Then $f_{\ep}(x) \to
+\infty$ as $|x|\to \infty$. \label{lem:f}\end{lem}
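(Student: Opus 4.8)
The plan is to establish the quantitative lower bound $f_\ep(r)\ge C\ep\,r^{1-p/2}$ for all sufficiently large $r$ (with $\ep$ fixed); since $p<2$ the exponent $1-p/2$ is positive, so this forces $f_\ep(r)\to+\infty$ as $|x|=r\to\infty$. The underlying reason is that a subquadratic trap cannot force $\etaeps$ to decay fast enough to keep the weight $\xi_\ep$ small compared with $\etaeps^2$.

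First I would build a slowly-decaying lower barrier for $\etaeps$ in the far field, reusing the barrier method of Theorem \ref{theoeta}. Fix $r$ large enough that $V>\laeps$ on $(r,\infty)$ (possible since the trap confines, $V\to\infty$), and for $s\ge r$ set
\[
\etatil(s):=\etaeps(r)\exp\!\left(-\frac{2\alpha}{p+2}\bigl(s^{(p+2)/2}-r^{(p+2)/2}\bigr)\right),\qquad \alpha:=\frac{\sqrt{2\tilde c}}{\ep},\quad \tilde c:=2c_2 .
\]
By (\ref{Vsubquad}), $V(s)\le\tilde c\,s^p$ once $s\ge r\ge1$, and by (\ref{etainf}), $\etatil^2\le\etaeps^2(r)\le\laeps$, so $V-\laeps+\etatil^2\le V\le\tilde c\,s^p$. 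Then, exactly as in the proof of (\ref{etaeps7}) (whose computation never uses $p\ge2$),
\[
-\Delta\etatil+\frac1{\ep^2}\bigl(V-\laeps+\etatil^2\bigr)\etatil\le\Bigl[\bigl(-\alpha^2+\tilde c\,\ep^{-2}\bigr)s^p+\alpha\bigl(\tfrac12 p+1\bigr)s^{p/2-1}\Bigr]\etatil\le0
\]
for $s$ large, because the choice of $\alpha$ makes the coefficient $-\alpha^2+\tilde c\,\ep^{-2}=-\tilde c\,\ep^{-2}$ of the dominant power $s^p$ negative. Thus $\etatil$ is a subsolution of (\ref{1-1}) on $(r,\infty)$ with $\etatil(r)=\etaeps(r)$ and $\etatil\to0$; the comparison argument used for (\ref{etaeps1}) then gives $\etaeps(s)\ge\etatil(s)$ for all $s\ge r$.

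Next I would feed this barrier into the definition of $f_\ep$ and extract the growth rate. Squaring the barrier and dividing by $\etaeps^2(r)$ yields
\[
f_\ep(r)=\frac1{\etaeps^2(r)}\int_r^\infty s\,\etaeps^2(s)\,ds\ \ge\ \int_r^\infty s\,\exp\!\left(-\frac{4\alpha}{p+2}\bigl(s^{(p+2)/2}-r^{(p+2)/2}\bigr)\right)ds .
\]
Keeping only $s\in[r,2r]$ and using $s^{(p+2)/2}-r^{(p+2)/2}=\int_r^s t^{p/2}\,dt\le2^{p/2}r^{p/2}(s-r)$ on that interval, I bound the right-hand side below by $r\int_r^{2r}e^{-\beta(s-r)}\,ds$ with $\beta=\frac{4\alpha}{p+2}2^{p/2}r^{p/2}=C\ep^{-1}r^{p/2}$. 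The elementary integral gives $r\,\frac{1-e^{-\beta r}}{\beta}\ge\frac{r}{2\beta}=C\ep\,r^{1-p/2}$ once $\beta r=C\ep^{-1}r^{(p+2)/2}$ is large, which completes the bound and hence the proof.

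The computation is routine given Theorem \ref{theoeta}, so there is no serious analytic obstacle; the one point needing care is conceptual rather than technical. One must use the \emph{fast}-decaying subsolution (so that $\etaeps\ge\etatil$ is a genuine lower bound) and then check that this seemingly weak lower bound is still enough. This works precisely because $f_\ep$ is a ratio: the exponential weight concentrates the mass of $\int s\,\etaeps^2$ in a boundary layer of width $\sim\ep\,r^{-p/2}$ just above $r$, and dividing by $\etaeps^2(r)$ converts this width into the algebraically growing factor $r^{1-p/2}$, rather than into an exponentially small quantity. The only place where the weakened hypothesis (\ref{Vsubquad}) must be monitored is in checking that the barrier construction of (\ref{etaeps7}) goes through verbatim for $p<2$.
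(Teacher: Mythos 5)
Your proof is correct and takes essentially the same approach as the paper: both construct an explicit exponentially decaying subsolution of (\ref{1-1}) on $(r,\infty)$ and compare with $\etaeps$ (the paper uses the barrier $\etaeps(r)e^{-\nu_{\ep,r}(s^q-r^q)/q}$ with $q>2$ and an $r$-dependent rate $\nu_{\ep,r}\le C\ep^{-1}r^{-(q-1-p/2)}$, while you reuse the $(p+2)/2$-exponent barrier from the proof of (\ref{etaeps7}) with fixed $\alpha\sim\ep^{-1}$), then integrate to reach the identical quantitative bound $f_\ep(r)\ge C\,\ep\, r^{1-p/2}\to\infty$. The only blemish is the harmless constant slip $s^{(p+2)/2}-r^{(p+2)/2}=\tfrac{p+2}{2}\int_r^s t^{p/2}\,dt$ rather than $\int_r^s t^{p/2}\,dt$, which only changes the constant in your $\beta$.
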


Note that with these assumptions on $V$,  there is a sequence of
functions $\zeta_{\alpha}$ in $\mathcal{H}$ such that
$\inf_{\alpha}G_{\ep}(\zeta_{\alpha})= -\infty$.  Physically this
happens because the centrifugal force due to rotation is bigger than
the subquadratic trapping potential. This indicates that, although
one can prove that in this situation, $f_\ep\to \infty$ as $r\to
\infty $ (compare Lemma \ref{lem:f}), this is not expected to give
any information about the physical behaviour of condensates.

\begin{proof}
Let $q>2$. For every $r \ge \max\{1,R\}$, we claim that

\be\label{nu}
\etaeps(s)\geq \etaeps(r)e^{-\nu_{\ep,r}(s^q-r^q)/q} 
\ee

for all $s\ge r$. Where  $\nu_{\ep,r}$ is the positive root of the
polynomial $\nu^2-\frac{q}{r^q}\nu-\frac{c\, }{\ep^2r^{2q-2-p}}$,
which for $\ep$ small satisfy

\[
    \nu_{\ep,r} < C \, \ep^{-1} \, r^{-\beta}
\]

with $\beta=q-1-p/2$. Indeed, the right hand side of (\ref{nu}) is a subsolution in $(r,\infty)$ of (\ref{1-1}) while $\etaeps$ is a solution. Boths functions are going to zero at infinity  and they are equal at $s=r$, so the result come  arguing as in the proof of (\ref{etaeps1}). \\

We use the previous estimates and the definition of $\xi_\ep$ to
compute
$$
f_\ep(r)=\frac{\xi_\ep(r)}{\etaeps^2(r)}=\frac
1{\etaeps^2(r)}\int_r^\infty s\etaeps^2 (s)\ ds \geq \int_r^\infty
e^{-\nu_r(s^q-r^q)} s\ ds \ge\frac {r^{2-q}} {\nu_r} > C \, \ep \,
r^{1-p/2}.
$$
and hence the result.
\end{proof}


\section{ Splitting the energy}  \label{S:split}

In this section we recall the proofs of (\ref{split}) and
(\ref{Fep}).

For $U\subset \R^2$, we will write $E_\ep(w; U)$ etc to denote the
integrals over $U$ of the energy density appearing in the definition
of $E_\ep(u)=E_\ep(u;\R^2)$ , and similarly $G_\ep(\cdot, U),
F_\ep(\cdot, U)$.

Note that $v=u/\etaeps$ is well defined since $\etaeps>0$. Since
$\eta_\ep$ satisfies (\ref{1-1}), we multiply it by $\eta_\ep
(1-|v|^2)$ and integrate over a ball $B_r$ to find that
\[
\int_{B_r} (|v|^2-1) (-{1\over 4}\Delta \eta_\ep^2+{1\over
2\ep^2}\eta_\ep^2 (V(x)+ \eta_\ep^2)+\frac 12|\nabla \eta_\ep  |^2)
=  \frac {\lambda_\ep}{\ep^2}\int_{B_r}(|u|^2 - \etaeps^2)  .
\]
Note that the Lagrange multiplier term tends to $0$ as $r\to
\infty$, since  both the $L^2$ norms of $u$ and $\etaeps$ are 1.
Moreover,
\begin{eqnarray*}
E_\ep(v\eta_\ep;B_r)&=&
J_\eps(\etaeps;B_r)+F_\ep(v;B_r)+\int_{B_r}\frac12 |\nabla \eta_\ep
|^2(|v|^2-1)+ \frac 12 \etaeps\nabla \etaeps\cdot \nabla
|v|^2\\
& &\quad\quad -\frac 1{4\ep^2}\etaeps^4(1-|v|^2)^2+{1\over
4\eps^2}\eta^4|v|^4+ {1\over 2\eps^2}V(x)\eta^2|v|^2-{1\over
4\eps^2}\eta^4- {1\over 2\eps^2}V(x)\eta^2.
\end{eqnarray*}
We integrate by parts to obtain
\[
\int_{B_r} \frac 12 \etaeps\nabla \etaeps\cdot \nabla |v|^2 =
-\int_{B_r}\frac 14 |v|^2 \ \Delta \etaeps^2 + \int_{\partial
B_r}\frac 12 |v|^2  \etaeps \nu \cdot \nabla \eta
\]
We use (\ref{etaeps7}) to estimate
\[
|\int_{\partial B_r}\frac 12 |v|^2  \etaeps \nu \cdot \nabla \eta|
\le \frac C \ep \int_{\partial B_r}\frac 12 \eta^2  |v|^2 \sqrt V =
\frac C \ep  V(r)^{-1/2} \int_{\partial B_r} V\,  |u|^2.
\]
Since $\int_{\R^2} V |u|^2<\infty$, we can easily find a sequence
$r_k\to \infty$ such that the above integral tends to $0$. Combining
the above  and letting $r_k\to \infty$ along this sequence, we
obtain (\ref{split}).

The only property of $V$ that the above argument used (implicitly)
was   (\ref{Vgrowth}), which will be used in the proof of
(\ref{etaeps7}).

The integration by parts that leads to (\ref{Fep}) is justified in a
similar fashion. One must estimate boundary terms of the form
$\int_{\partial B_r} \xi  \nu \cdot (iv, \nabla v)$. To do this we
note that
\[
 \xi  \nu \cdot (iv, \nabla v)
\ = \ f_\ep(r) \etaeps^2 (iv, \nabla v) \ = \ f_\ep(r)  (iu, \nabla
u) \le \| f_\ep\|_\infty ( |u|^2+  |\nabla u|^2).
\]
We prove in (\ref{fep1}) that $f_\ep$ is bounded as long as $V$
satisfies (\ref{Vquad}) (in fact we show that $f_\ep \le C\ep^{2/3}$
for large $r$) and since $u\in H^1(\R^2)$, we can again find a
sequence $r_k\to \infty$ such that the boundary terms vanish.

Note also that the fact that $f_\ep \in L^\infty$, or equivalently
that $|\xi_\ep| \le C \etaeps^2$, implies that the term $ \xi_\ep
Jv$ appearing in (\ref{Fep}) is integrable on $\R^2$ for $v =
u/\etaeps$, whenever $u$ has finite energy.


\section{Proofs of Theorems \ref{T1} and \ref{T2}}

In this section we use the estimates  we have already established to complete
the proofs of our main theorems.


\begin{proof}[Proof of Theorem \ref{T1}]
We assume that $u_\ep$ minimizes $E_\ep$ and that  $\Omega \le C|\log\ep|$ is such that
(\ref{subcrit}) holds.

Let $\chi$ be a smooth function  such that $\chi \equiv 1$
in $ \{x\in \cd: \mbox{dist}(x,\partial \cd) \ge 2 |\log\ep|^{-3/2}\}$, and with support in $\cd_1$. We also
assume that $\|\nabla \chi\|_\infty \le 2 |\log\ep|^{3/2}$.

Let $v = u_\ep/\etaeps$, so that
$E_\ep(u) = G_\ep(\etaeps) + F_\ep(v)  = E_\ep(\etaeps) + F_\ep(v)$.
Thus $F_\ep(v) \le 0$. We write
\[
F_\ep(v) = A_1 - A_2+B
\]
where
\[
A_1 =
\int_{\R^2} \chi \left[
\frac{\etaeps^2}{ 2}|\nabla
v|^2   + \frac {\etaeps^4}{ 4\eps^2} (|v|^2-1)^2\ \right] dx,
\quad\quad\quad
A_2 =
2\O \int_{\R^2} \chi
 \xi_\ep\, Jv
   dx
\]
and
\[
B = \int_{\R^2} (1-\chi) \left[
 {\etaeps^2\over 2} \left( |\nabla
v|^2 - 4\O f_\ep \,Jv\right)
+ {\etaeps^4\over 4\eps^2} (|v|^2-1)^2\ \right] dx,
\]
It follows directly from our estimates on $f_\ep$ that $0< f_\ep  \le C (\ep^{2/3} +  |\log\ep|^{-3/2})$
in the support of $1-\chi$, for small enough $\ep$.
Since $\Omega \le  C|\log \ep|$, it follows that ${\O f_\ep} \le \frac 14$ for all sufficiently small $\ep$ and (recalling that $|Jv|\le \frac 12 |\nabla v|^2$) we deduce that
\[
 \left (|\nabla
v|^2- {4\O f_\ep} Jv \right )
\ge \frac 12 |\nabla v|^2
\]
in the support of $1-\chi$. It follows immediately that
\be
B \ge \int_{\R^2} (1-\chi) \left[
\frac{\etaeps^2}4 |\nabla v|^2
+ {\etaeps^4\over 4\eps^2} (|v|^2-1)^2\ \right] dx \ge 0
\label{B.est}\ee
and hence that $B=0$  if and only if $v$ is a constant of modulus $1$ in the support of $1-\chi$.

Since $F_\ep(v)\le 0$, it is clear that $A_1+B \le A_2$.

Next, define $\tilde \ep = {\ep } / ( \inf_{\cd_1} {\etaeps}) $, so that (in view of (\ref{etaeps2}) and the
definition of $\cd_1$)
\[
\tilde \ep \le  C\ep |\log \ep|^{3/4} , \quad\quad \frac1 {\tilde\ep^2} \le \frac{\etaeps^2}{\ep^2}\mbox{ in }\cd_1.
\]
Then  (\ref{B.est}) and  (\ref{etaeps2} imply that,
\be
\int_{\cd_1} \frac 12|\nabla v|^2 + \frac 1{4\tilde \ep^2}(|v|^2-1)^2  \le (\inf_{\cd_1} \etaeps)^{-2}(A_1+ 2B)
\le C |\log \, \ep|^{3/2} A_2.
\label{A2bounds}\ee

To continue, let $w = \frac v{|v|} = w^1+i w^2$. From (\ref{subcrit})  we see that $|v|\ge \frac 12$
in $\cd_1$, and hence it is clear that $w\in H^1(\cd_1)$, and $|w|^2\equiv 1$. It follows that $Jw= 0$; we will recall a standard proof of this fact in a moment.
Thus
\[
A_2 = 2\Omega \int_{\cd_1} \chi \xi_\ep (Jv - Jw) \ dx =
2\Omega \int_{\cd_1} \nabla^\perp(\chi \xi_\ep) \cdot [ (iv,\nabla v) -(iw, \nabla w)] \ dx.
\]
If we write $v = \rho e^{i\phi}$ in $\cd_1$, 
then a calculation shows that
\[
(iv,\nabla v) = \rho^2 \nabla \phi,\quad\quad (iw, \nabla w) = \nabla \phi.
\]
From the latter fact we see that $Jw = \frac 12 \nabla\times  (iw, \nabla w)  = 0$, as we asserted above.
Also, from this and the fact that $\rho \ge \frac 12$ in $\cd_1$ we estimate
\[
|(iv,\nabla v) - (iw, \nabla w)| = \frac{ |\rho^2 -1|}{\rho} |\rho \nabla \phi| \le 2 |\,|v|^2-1| \ |\nabla v|.
\]
Using  (\ref{A2bounds}) , we deduce that
\beann
A_2
&\le&
2\Omega \|\nabla (\chi \xi_\ep)\|_\infty \int_{\cd_1} \left( \frac  {\tilde \ep} 2|\nabla v|^2 + \frac 1{2\tilde \ep}(|v|^2-1)^2\right)dx
\\
&\le &
C \Omega \|\nabla (\chi \xi_\ep)\|_\infty  \ep  |\log \, \ep|^{9/4} A_2.
\eeann

One checks easily from the definitions and from
(\ref{etaeps5}) that
\be
\|\nabla (\chi \xi_\ep)\|_\infty \le \ \| \nabla \chi\|_\infty \|\xi_\ep\|_\infty +  \| \nabla \xi_\ep\|_\infty
\le C |\log \ep|^{3/2}
\label{chi-xi}\ee
so we conclude that $A_2\le  C\ep  |\log \ep|^{15/4}A_2 \le \frac 12 A_2$ for all sufficiently small $\ep$. We know from (\ref{A2bounds}) that $A_2\ge 0$, and it follows that $A_2=0$, and hence (again appealing to (\ref{A2bounds})) that $A_1 = B = 0$. Thus $\|\nabla v\|_{L^2} = \| 1 - |v|^2\|_{L^2} = 0$, and so $v$ is
a constant of modulus $1$ as required.

\end{proof}

The proof of Theorem \ref{T2} will use the following result, which is Lemma 8 in \cite{J}.

\begin{lem}  There exists a universal constant $C>0$ such
that for any $\kappa \in (1,2)$,  open set $U \subset \R^2$ and $u \in H^1(U; \R^2)$,
and $\ep\in ((0,1)$,
\begin{eqnarray}
&\left| \int_U \phi Ju \right|
& \le \kappa \int |\phi|  \frac{ e_\ep(u)}{|\log\ep|} \nonumber \\
& \quad\quad\quad
\quad &+C \ep^{(\kappa-1)/50}
(1 +\| \phi\|_{W^{1,\infty}})
\left( \|\phi\|_\infty + 1 + \int_{ \mbox{\scriptsize{supp}  }\,\phi } (|\phi| +1) e_\ep(u) \ dx\right)
\label{newJacest}\end{eqnarray}
for all   $\phi \in C^{0,1}_c(U)$. Here $e_\ep(u) = \frac 12 |\nabla u|^2 + \frac 1{4\ep^2} (|u|^2-1)^2$.
\label{L.newJacest}\end{lem}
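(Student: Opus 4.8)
The plan is to deduce this from the two standard pillars of Ginzburg--Landau vortex analysis: a \emph{vortex ball construction} and a \emph{Jacobian concentration estimate}. The starting observation is that $\int_U \phi\, Ju = -\frac12\int_U \nabla^\perp\phi\cdot(iu,\nabla u)$, so only the region where $|u|$ is bounded away from $1$ can produce a large contribution, and on that region the energy density $e_\ep(u)$ is necessarily large. Quantifying this trade-off --- each unit of winding number costs energy $\approx \pi|\log\ep|$ while contributing $\pi$ to $\int\phi\,Ju$ --- is exactly what produces the factor $1/|\log\ep|$ in the main term.

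First I would apply a vortex ball construction (as in \cite{js}, \cite{sa}) to obtain a finite, pairwise disjoint family of closed balls $\{\bar B(a_i,r_i)\}_{i\in I}$ covering $\{|u|\le \frac12\}\cap\mathrm{supp}\,\phi$, whose total radius $\sum_i r_i$ is a prescribed small power of $\ep$ and which satisfy the lower bound
\[
\int_{B(a_i,r_i)} e_\ep(u)\ \ge\ \pi\,|d_i|\,\Big(\log\frac{r_i}{\ep}-C\Big),\qquad d_i:=\deg\big(u/|u|,\partial B(a_i,r_i)\big).
\]
Growing the balls (keeping them disjoint) until $\log(r_i/\ep)\ge |\log\ep|-C$ converts this into $\pi|d_i|\le \kappa\,|\log\ep|^{-1}\int_{B_i}e_\ep(u)$ for small $\ep$, the loss being absorbed into $\kappa>1$. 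Next I would invoke the Jacobian concentration estimate: on $U\setminus\bigcup_i B_i$ one has $|u|\ge\frac12$, so writing $w=u/|u|$ (for which $Jw=0$) the contribution $\int_{U\setminus\bigcup_i B_i}\phi\,Ju$ is an error term, controlled by $e_\ep(u)$ there through the smallness of $\big||u|^2-1\big|$; inside the balls one has the comparison
\[
\Big|\int_U \phi\, Ju-\pi\sum_i d_i\,\phi(a_i)\Big|\ \le\ C\|\nabla\phi\|_\infty\Big(\sum_i r_i\Big)\sup_i\Big(\frac{1}{r_i}\int_{B_i}e_\ep(u)\Big)+\mbox{error}.
\]
Combining the two inputs gives $\big|\int_U\phi\,Ju\big|\le \pi\sum_i|d_i|\,|\phi(a_i)|+\mbox{error}\le \frac{\kappa}{|\log\ep|}\sum_i|\phi(a_i)|\int_{B_i}e_\ep(u)+\mbox{error}$, and replacing $|\phi(a_i)|$ by $|\phi|$ under the integral costs only $\|\nabla\phi\|_\infty r_i\int_{B_i}e_\ep(u)$, which I fold into the error. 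This produces the main term $\kappa\int_U|\phi|\,e_\ep(u)/|\log\ep|$, while the remaining terms assemble into $C\ep^{(\kappa-1)/50}(1+\|\phi\|_{W^{1,\infty}})\big(\|\phi\|_\infty+1+\int_{\mathrm{supp}\,\phi}(|\phi|+1)e_\ep(u)\big)$ once the total radius is chosen to be the appropriate power $\ep^{(\kappa-1)/50}$.

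The hard part will be running the vortex ball construction with a \emph{sharp} constant: one must arrange that the cost per unit of degree is $\pi|\log\ep|(1-o(1))$ while simultaneously keeping $\sum_i r_i$ a controlled power of $\ep$, since it is precisely the balance between these two requirements that pins down both the admissible loss factor $\kappa\in(1,2)$ and the error exponent $(\kappa-1)/50$. Secondary difficulties are propagating the Lipschitz weight $\phi$ through the construction (handled by the localization $|\phi(a_i)|\approx |\phi|$ up to $\|\nabla\phi\|_\infty r_i$) and bounding the good-set Jacobian contribution uniformly. Since the statement is quoted verbatim as Lemma 8 of \cite{J}, in the present paper one simply cites that reference, but the construction above is the route I would take to reprove it.
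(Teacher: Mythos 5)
For this statement the paper supplies no proof of its own: the lemma is quoted verbatim as Lemma 8 of \cite{J}, and the paper's only added content is the remark that, by inspection of the proof there, the exponent $(\kappa-1)/50$ can be taken of the form $\frac12\alpha$ with $\alpha=(\kappa-1)/12\kappa$ as in Theorem 2.1 of \cite{js}, together with the footnote that the cited proof ultimately rests on a vortex ball construction from \cite{js}. Your sketch is a reconstruction of exactly that route: vortex balls giving the per-ball cost $\pi|d_i|\big(\log(r_i/\ep)-C\big)$, the Jacobian comparison $\int\phi\,Ju\approx\pi\sum_i d_i\,\phi(a_i)$ with an error of order $\|\nabla\phi\|_\infty\sum_i r_i$ times energy, and the good-set contribution killed by writing $w=u/|u|$, $Jw=0$, and bounding the difference pointwise by $C\,\big||u|^2-1\big|\,|\nabla u|\le C\ep\, e_\ep(u)$ --- the same trick the paper itself uses in the proof of Theorem 1.2. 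You also correctly note that within this paper one simply cites the reference. So the architecture is right and matches the proof being invoked.

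One quantitative step is inconsistent as written, and it sits precisely at the point you yourself identify as the crux. You cannot grow the balls until $\log(r_i/\ep)\ge|\log\ep|-C$, because that forces $r_i\ge e^{-C}$, i.e.\ radii of order one, which is incompatible with keeping $\sum_i r_i$ a small power of $\ep$ --- and without the power-of-$\ep$ total radius the error term in (\ref{newJacest}) cannot have the form $C\ep^{(\kappa-1)/50}(\cdots)$. With final radii $r_i$ of size $\ep^{s}$, the ball bound only gives
\[
\pi|d_i|\ \le\ \frac{1}{1-s}\,\frac{1+o(1)}{|\log\ep|}\int_{B_i}e_\ep(u)\ dx,
\]
so to achieve the factor $\kappa$ in the main term you must stop the growth at $s\le(\kappa-1)/\kappa$. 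It is exactly this constraint that makes the error exponent $\kappa$-dependent and degenerate as $\kappa\downarrow 1$, in agreement with $\alpha=(\kappa-1)/12\kappa$ in Theorem 2.1 of \cite{js}; the paper's exponent $(\kappa-1)/50$ is then a uniform weakening of $\frac12\alpha=(\kappa-1)/24\kappa$, valid since $24\kappa<50$ for $\kappa\in(1,2)$. With the stopping criterion corrected in this way, your outline is the standard argument carried out in \cite{js} and \cite{J}; the remaining work (disjointness bookkeeping during growth, and the bad set where energy is large but the local degree vanishes) is handled there.
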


The lemma as stated in \cite{J} does not explicitly specify the exponent
$(\kappa-1)/50$ appearing on the right-hand side of (\ref{newJacest}).
By inspection of the proof, however, one sees that this exponent can be
taken to have the form $\frac 12 \alpha$, where $\alpha = (\kappa-1)/12\kappa$ as in
Theorem 2.1 of \cite{js}.

\begin{proof}[Proof of Theorem \ref{T2}]
We continue to use notation from the proof of Theorem \ref{T1}, such as $A_1, A_2, B, \tilde \ep$, and so on.

We first  invoke the lemma, with
$\tilde \ep$ in place of $\ep$ and  $\chi\,\xi_\ep$ in place of $\phi$, and with $\kappa>1$ to be chosen.
This yields
 \[
\left| A_2 \right|
\  \le \
2\Omega\kappa\int_{\R^2} \chi\  \xi_\ep  \, \frac {e_{\tilde \ep}(v)}{|\log \tilde \ep|} dx + \ce,
 \]
where $\ce$ denotes the error terms in (\ref{newJacest}).
We note that for all sufficiently small $\ep>0$, the error term satisfies the bound
$\ce \le C \ep^\beta(1+ |A_2|)$, for $\beta = (\kappa-1)/100$, for all sufficiently small $\ep$. This is a consequence of (\ref{A2bounds}) and  the estimates
\[
\|\chi\ \xi_\ep\|_{W^{1,\infty}} \le C |\log\ep|^{3/2},\quad
\|\chi\ \xi_\ep\|_{L^{\infty}} \le C.
\]
These in turn follow from (\ref{chi-xi}) together with
(\ref{etaeps5}) .

Now the choice of $\tilde \ep$ implies that $e_{\tilde \ep}(v) \le
\frac  12|\nabla v|^2 + \frac {\etaeps^2}{4\ep^2}(|v|^2-1)^2$  in $\cd_1$,
and recalling that $\xi_\ep = f_\ep \etaeps^2$, we obtain
\begin{eqnarray}
(1- C\ep^\beta)|A_2|
\ &\le  \ &
2\Omega\kappa
\frac  {\|f_\ep\|_\infty} {|\log \tilde \ep|}
\int\chi (\frac {\etaeps^2}2|\nabla v|^2 + \frac {\etaeps^4}{4\ep^2}(|v|^2-1)^2)  + C \ep^\beta
\nnn\\
& = \ &
2\Omega\kappa
\frac  {\|f_\ep\|_\infty} {|\log \tilde \ep|}
 \ A_1 + C \ep^\beta.\nnn
\end{eqnarray}
We know from (\ref{f1}) that
$\|f_\ep\|_\infty \le (1+ C \ep^{1/3})\|f_0\|_\infty \le  (1+ C \ep^{\beta})\|f_0\|_\infty$, and from the choice of
$\tilde \ep$,  for any $K>0$ there exists $\ep_0>0$ such that
$|\log\tilde \ep| \ge (|\log\ep| - \log|\log\ep| )(1 + K \ep^\beta)$ if $0<\ep < \ep_0$. Thus
\[
|A_2| \le \Omega \left( \frac {2\| f\|_\infty}{|\log \ep| - \log|\log\ep|}\right) \kappa A_1 + C\ep^\beta
\]
for all sufficiently small $\ep$. Assume that $\Omega \le \frac 1 {2\| f\|_\infty}( |\log \ep| -(c_1+1) \log|\log\ep|)$, for $c_1$ to be chosen below. Then
\be
|A_2|
\ \le \ \left(1 -c_1\frac {\log|\log\ep|}{|\log \ep| -\log|\log\ep|}\right) \kappa A_1 + C\ep^\beta
\ \le \ \left(1 - c_1\frac {\log|\log\ep|}{|\log \ep| }\right) \kappa A_1 + C\ep^\beta.
\label{a2.kest}\ee
We now take $\kappa := 1+  c_1\frac {\log|\log\ep|}{|\log \ep| }$, so that $\beta = (\kappa-1)/100
= \frac {c_1} {100} \frac {\log|\log\ep|}{|\log \ep| }$. Recalling that $A_1+ B \le A_2$ and that $B\ge 0$, clearly $A_1\le A_2$, so
we deduce that
\[
c_1^2( \frac {\log|\log\ep|}{|\log \ep| })^2 A_1 \le C \ep^\beta = C |\log \ep|^{-c_1/100}.
\]
If $c_1 = 400$ then we conclude that $A_1 \le C |\log\ep|^{-2}$.

Then (\ref{a2.kest}) implies that $A_2\le C|\log\ep|^{-2}$, and it follows that
$B \le  C|\log\ep|^{-2}$.
In view of (\ref{A2bounds}), this implies that
\be
\int_{\cd_1}
 |\nabla v|^2
 +\frac 1{4\ep^2}(|v|^2-1)^2\ \le C  |\log\ep|^{-2}.
\label{cd1.est}\ee
The estimate $\| \nabla v\|_\infty \le \frac C \ep$ (see (\ref{etaeps3})) and (\ref{cd1.est}) are easily seen to imply that
\be
|v| \ge 1 - C |\log\ep|^{-1}\mbox{ in }\cd_1
\label{mod}\ee
for all sufficiently small $\ep$. Thus $\Omega$ is subcritical for small enough $\ep$.

\end{proof}


\end{document}